\documentclass[letterpaper,11pt,reqno]{amsart}

\makeatletter
\usepackage{amssymb}
\usepackage{latexsym}
\usepackage{amsbsy}
\usepackage{amsfonts}
\usepackage{graphicx}
\usepackage{enumerate}
\usepackage{enumitem}
\usepackage{mathtools}
\usepackage{subcaption}
\usepackage{color}
\usepackage{tcolorbox}
\usepackage{url}
\usepackage{comment}
\usepackage{appendix}

\usepackage{tikz}

\def\marginpar#1{\ignorespaces}

  \newcommand{\beq}{\begin{equation}}
    \newcommand{\eeq}{\end{equation}}
    
    \newcommand{\bal}{\begin{align}}
    \newcommand{\eal}{\end{align}}
    \newcommand{\bals}{\begin{align*}}
    \newcommand{\eals}{\end{align*}}
    
\newtheorem{theorem}{Theorem}[section]

\newtheorem{lemma}[theorem]{Lemma}
\newtheorem{proposition}[theorem]{Proposition}
\newtheorem{corollary}[theorem]{Corollary}

\numberwithin{equation}{section}
\makeatother
\begin{document}
\title[PoS, CEX and MFG]{Proof of Stake economy under centralized exchanges -- a mean field model}

\author[W. Tang]{Wenpin Tang}

\address[W. Tang]{Department of Industrial Engineering and Operations Research, Columbia University, S.W. Mudd Building, 
500 W 120th St, New York, NY 10027} 
\email{wt2319@columbia.edu}

\date{\today}

\begin{abstract}
We consider the interaction between centralized trading and decentralized Proof of Stake (PoS) blockchain ecosystems. 
Motivated by the increasing dominance of centralized exchanges and the institutionalization of crypto markets,
we study how trading activities on centralized exchanges affect staking behavior, token allocation, and decentralization within a PoS blockchain.
We formulate a continuous-time mean field model,
where the miners simultaneously act as validators in the PoS protocol and traders in a centralized market with price impact.
Under suitable assumptions, we establish  the local well-posedness of the mean field system,
and derive a semi-explicit characterization of the equilibrium trading strategy. 
Numerical results suggest that centralized trading activities may enhance staking participation, and promote decentralization of the staking distribution through market incentives.
We also study the effects of transaction costs and token supply mechanisms on the equilibrium staking ratio and concentration profile. 
These results illustrate how market microstructure and centralized liquidity provision can exert significant influence on decentralized blockchain protocols.
\end{abstract}

\maketitle

\textit{Key words}: Blockchain, consumption-investment problem, market impact, mean field games, Proof of Stake, staking profile.


\section{Introduction}
\label{sc1}

\quad A blockchain is a distributed ledger allowing the secure transfer of assets in a network without an intermediary,
hence achieving decentralization. 
In the past decade, 
blockchain technology has advanced tremendously,
with a range of applications including 
cryptocurrencies \cite{Naka08, Wood14},
healthcare \cite{MC19, TPE20},
supply chain \cite{CTT20, Kam18},
and prediction markets \cite{Augur, Poly}. 
At the core of blockchain technology is the consensus protocol, 
which specifies a set of voting rules
for the miners or validators to agree on an ever-growing log of transactions
so as to form a distributed ledger.
In early days, 
Proof of Work (PoW), which relies on computing resources, is used as the consensus protocol.
Due to high energy consumption, there is a strong incentive among blockchain practitioners
to switch from PoW to Proof of Stake (PoS) by committing stakes for reaching consensus in blockchains.
An example is Ethereum's Merge in September 2022 \cite{Merge}.

\quad One major consequence of blockchain technology is the emergence of decentralized finance (DeFi),
which was initially driven by the development of automated market makers (AMMs) known as Uniswap \cite{Uni2, Uni3}.
The idea of DeFi is to replace traditional financial intermediaries in centralized exchanges (CEXs) with decentralized exchanges (DEXs) governed by smart contracts. 
In contrast to CEXs, where liquidity provision and price discovery are controlled by centralized entities, AMMs enable peer-to-peer trading through algorithmic liquidity pools deployed on the blockchain,
thereby improving transparency, accessibility, and composability of financial services. 
See e.g., \cite{CJ21, MMT24, MMTZ22} for arbitrage mechanisms and pricing in AMMs,
and \cite{CCM26, CZ25, CDM25, MC25} for optimal execution and trading in AMMs.

\quad Despite the success of AMMs and DEXs, many modern DeFi protocols increasingly incorporate hybrid mechanisms combining AMM-based liquidity with order-book-like structures in an attempt to achieve higher trading efficiency and lower transaction costs.
See Figure \ref{fig:0} for an illustration of DeFi evolution.
CEXs such as Coinbase and Binance still dominate the cryptocurrency market in terms of trading volume, 
accounting for roughly $80\%-90\%$ of total spot trading activity \cite{Lim}.
Moreover, a significant portion of crypto exposure now flows through regulated financial products such as crypto ETFs, further strengthening the role of centralized financial infrastructure.
At the regulatory level, recent developments such as the Clarity Act \cite{Clarity}  indicate the growing institutionalization of digital asset markets, potentially accelerating the convergence between DeFi and traditional centralized financial systems.
\begin{figure}[!htbp]
    \centering
    \includegraphics[width=0.9\textwidth]{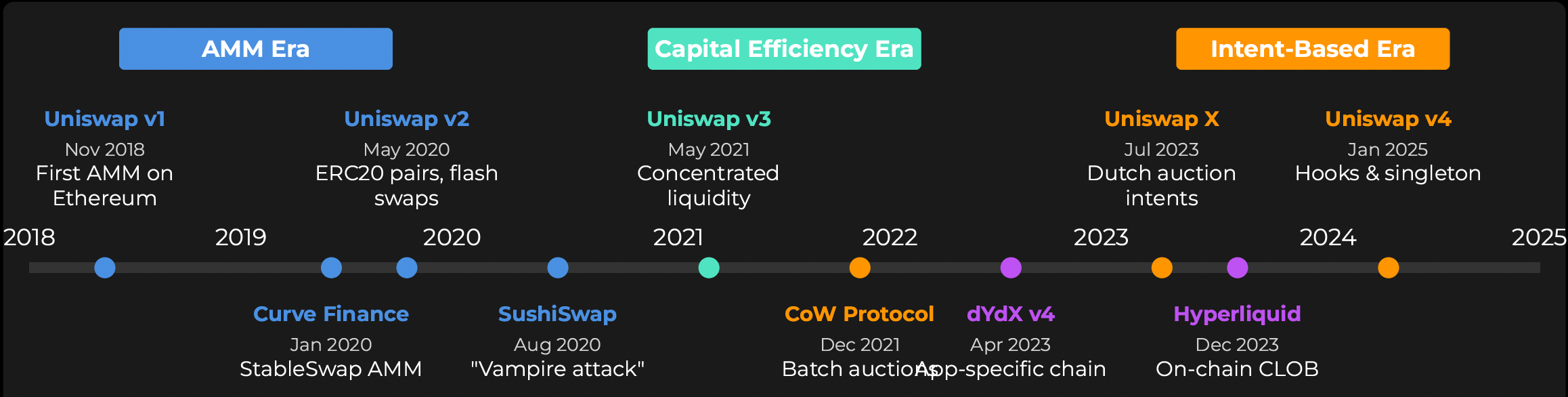}
     \caption{DeFi evolution (courtesy of Ruizhe Jia).}
    \label{fig:0}
\end{figure}

\quad The purpose of this paper is to study the impact of centralized trading activities on decentralized blockchain ecosystem,
which differs from the aforementioned AMM-based literature.
Here we consider the PoS protocol,
where there is a bidding mechanism to select a miner to do the work of validating a new block.
Each miner is required to commit some tokens, 
and the winning probability is proportional to the number of tokens committed.
The selected miner will create a new block, 
and get some crypto tokens as a reward\footnote{The reward includes that issued by the PoS ecosystem, and transaction fees. See e.g., \url{https://ethereum.org/staking/} for Ethereum staking rewards.}.
With the presence of trading exchanges, 
a miner can play a dual role both as a validator of the PoS blockchain
and as a trader/investor of the PoS tokens. 
Here we do not plan to address the security issues of this dual role, which is a substantial problem on its own.
Rather, our focus is on the economic side of the miner's dual role.
For a miner, there are two revenue resources:
(1) staking reward from the PoS ecosystem; 
(2) profit or loss from trading PoS tokens.
The question is how each miner makes decision to maximize her profit by exploiting or trading off the two revenue contributions,
and how the collective behavior of the miners affects the PoS ecosystem in turn.

\quad In this paper, we consider the PoS ecosystem in the presence of a centralized exchange,
following the approach outlined in \cite[Section 5]{Tangsuv}.
In this setting, we can define a notion of {\em equilibrium trading strategy} for a representative miner,
taking into account the interactions and competitive behavior among the miners and other investors on the exchange.
This leads to a mean field problem, 
which we provide a semi-closed form solution via a fixed-point equation.
As a by-product, we analyze the evolution of the staking profile of the miner population from a mean field perspective.
By solving numerically the fixed-point equation,
we also observe that under (centralized) trading,
\begin{itemize}[itemsep = 3 pt]
\item
the staking ratio increases over time;
\item
the staking profile decentralizes over time;
\item
the staking ratio increases slowly as the transaction costs become aggressive;
\item
the staking profile is more decentralized as the token reward grows faster,
\end{itemize}
demonstrating the market power over blockchain protocols.
We also mention the work \cite{BBLL20, LRS19, PW21} on the mean field game formulation of the PoW protocol,
which does not involve market microstructure.

\medskip
{\bf Organization of the paper}: The remainder of the paper is organized as follows. 
Section \ref{sc2} provides background and the mean field problem formulation.
In Section \ref{sc3}, we prove the well-posedness of the proposed mean field problem.
Numerical experiments are reported in Section \ref{sc4}.
Concluding remarks are summarized in Section \ref{sc5}.

\medskip
{\bf Notations}: Below we collect a few notations that will be used throughout.
\begin{itemize}[itemsep = 3 pt]
\item
$\mathbb{R}$ denotes the set of real numbers. 
\item
For a function $f: [0,\infty) \times \mathbb{R} \to \mathbb{R}$, 
$\partial_t f$ and $\partial_x f$ are the time- and space-derivatives of $f$.
\item
For a function $f: \mathcal{X} \to \mathbb{R}$ on a set $\mathcal{X}$,
$||f||_\infty: = \sup_{x \in \mathcal{X}}|f(x)|$ is the sup-norm of $f$.
\item
For a random variable $X$, $\mathbb{E}X$ denotes the expectation of $X$.
\end{itemize}

\section{Problem formulation}
\label{sc2}

\quad In this section, we introduce the mean field problem for trading under the PoS protocol.
We follow the presentation in \cite[Section 5]{Tangsuv},
with a few modifications that we will make precise.

\subsection{The PoS model}
Time is continuous, indexed by $t \in [0,T]$, 
for a fixed $T > 0$ representing the length of a finite horizon.
Let $\{N(t), \, 0 \le t \le T\}$ (with $N(0):= N$) denote the process of the supply of tokens,
which is increasing in time and sufficiently smooth.
So the derivative $N'(t)$ represents the instantaneous inflation rate, 
or rate of ``reward''  by the PoS protocol.
This rate is dynamically adjusted by blockchain community consensus, 
which includes the issuance of new tokens to miners and the burning of transaction fees 
(see \cite{ETHfound} for ETH supply mechanism). 
Its value depends on the overall network usage and adoption, and is typically stochastic.
Here we decouple the token supply with miner's (strategic) trading,
so the randomness in token supply is considered to be {\em external} in our setting.
Without loss of generality, we assume that $\{N(t), \, 0 \le t \le T\}$ is a {\em deterministic} process\footnote{For instance, the process $N(t)$ can take a polynomial form: 
\begin{equation*}
N_\alpha(t) = (N^{\frac{1}{\alpha}} + ct)^\alpha, \qquad t \ge 0.
\end{equation*}
This parametric family covers different token supply schemes according to the values of $\alpha$:
for $0< \alpha < 1$, the process $N_{\alpha}(t)$ corresponds to slowed inflation;
for $\alpha = 1$, the process $N_{1}(t) = N + ct$ gives a constant inflation rate $c$;
for $\alpha > 1$, the process $N_{\alpha}(t)$ amounts to accelerated inflation.
See \cite{Lev26} for discussions on ETH supply, which suggests slowed inflation.}.

\quad Let $K \ge 2$ be the number of miners/stakers, who are indexed by $k \in [K]:= \{1, \ldots, K\}$.
For each miner $k$, 
$\{X_k(t), \, 0 \le t \le T\}$ (with $X_k(0) = x_k$) denotes the number of tokens that miner $k$ chooses to stake for PoS mining,
and $\{\nu_k(t), \, 0 \le t \le T\}$ denotes miner $k$'s trading strategy.
Let $\{Z(t), \, t \ge 0\}$ (with $Z(0) = z$) be the process of the number of tokens that are not staked for the PoS mining\footnote{$Z(t)$ may have contributions from both the miners and retail investors. 
For the miners, they decide to stake a number of tokens for PoS mining, and set aside the remaining for other opportunities such as trading and DeFi lending/yield farming (e.g., Aave and Yearn Finance). 
For retail investors, they can also choose to stake some of their tokens. 
In this case, they will delegate their tokens to a PoS miner $k$, which is included in $X_k(t)$.},
so
\begin{equation}
\label{eq:conserv}
\sum_{k = 1}^K X_k(t) + Z(t) = N(t),  \quad \mbox{for } 0 \le t \le T.
\end{equation}
That is, only $N(t) - Z(t)$ tokens are locked for the PoS mining.

\quad In the (discrete) PoS protocol, in each round of the bidding process,
individual miners commit tokens so as to be elected to validate the block and receive a reward;
and the winning probability is proportional to the number of tokens committed. 
For instance, each round in Ethereum takes about $10$ seconds, corresponding to the the block-generation time \cite{BV14}.
In the continuous-time setting, 
time required for each round of voting is ``infinitesimal''.
Following \cite{TY23b}, the dynamics of miner $k$'s tokens under trading is formulated as\footnote{We assume that newly issued tokens are automatically staked.
In \cite{TY23b}, it is assumed that all tokens are staked for PoS mining.
Thus, $Z(t) \equiv 0$ and the dynamics \eqref{eq:Xnu2} reduces to $X'_k(t) = \nu_{k}(t) + \frac{N'(t)}{N(t)} X_k(t)$ for $0 \le t \le  \mathcal{T}_k$.
For ETH, it is observed that only around $30\%$ of total supply is locked in staking \cite{Lev26}. 
This necessities introducing the process of unstaked tokens $Z(t)$.}:
\begin{equation}
\label{eq:Xnu2}
X'_k(t) = \nu_{k}(t) + \frac{N'(t)}{N(t) - Z(t)} X_k(t) \quad \mbox{for } 0 \le t \le \tau_k  \wedge T:= \mathcal{T}_k,
\end{equation}
where $\tau_k: = \inf\{t>0: X_k(t) = 0\}$.
We set $X_k(t) = X_k(\mathcal{T}_k)$ for $t > \mathcal{T}_k$.
The equations \eqref{eq:conserv}--\eqref{eq:Xnu2} together imply that
\begin{equation*}
\sum_{k = 1}^K \nu_k(t) + Z'(t) = 0,
\end{equation*}
which can be viewed as a clearing house condition.
The instantaneous staking rate is $N'(t) - Z'(t)$,
where $N'(t)$ is from staking insurance
and $-Z'(t)$ is from trading between unstaked and staked tokens.
If $-Z'(t) >0$, the miners buy unstaked tokens for staking;
if $-Z'(t) < 0$, staked tokens are unstaked for trading.
As a result,
\begin{equation}
\label{eq:Z}
Z(t) = Z(0) - \int_0^t \sum_{k = 1}^K \nu_k(s) ds.
\end{equation}

\subsection{Price formation}
Central to each miner's decision is the price process $\{P(t), \, t \ge 0\}$ of each (unit) of token. 
As mentioned in the introduction, 
crypto trading remains centralized, with CEXs handling most volume via limit order books.
Our goal is to understand the PoS ecosystem under centralized trading mechanism.
Modern market microstructure theory postulates that market impact is the underlying driver of price dynamics,
i.e., trading volume impacts price movement.
For ease of presentation,
we assume linear price impact\footnote{Here we neglect market shocks triggered by macroeconomics,
geopolitics, breaking news, etc.}, i.e., the Almgren-Chriss model \cite{AC01}:
\begin{equation}
\label{eq:ACP}
P(t) = P(0) + \sigma B(t) - \eta(Z(t) - Z(0)),
\end{equation}
where $\{B(t), \, t \ge 0\}$ is standard Brownian motion,
$\sigma >0$ is the volatility, and $\eta > 0$ is a market impact parameter (known as Kyle's Lambda \cite{Kyle85}).
Refer to \cite[Chapter 3]{Gueant16} for background,
and \cite{DB15, GS11, TL11} for other market impact models.

\subsection{Consumption-investment problem}
We formulate the consumption-investment problem for each miner, 
following the idea in \cite{RS21, TY23b}.

\quad Let $b_{k}(t)$ be the (units of) risk-free asset that miner $k$ holds at time $t$,
and let $r > 0$ be the risk-free rate.
Assume that all participants (including the $K$ miners and retail investors) are allowed to trade tokens on a CEX,
whereas each miner can only exchange cash with an external source (say, a bank).
Let $\{c_k(t), \, 0 \le t \le T\}$ be the consumption process of miner $k$,
which evolves as:
\begin{equation}
dc_k(t) = rb_k(t) dt -db_k(t) - P(t) \nu_k(t) dt - N'(t) \, L\left(\frac{\nu_k(t)}{N'(t)}\right) dt,  \qquad 
0 \le t \le \mathcal{T}_k, \tag{C1}
\end{equation}
where $L(\cdot)$ is an even function, increasing on $\mathbb{R}_+$, strictly convex and asymptotically super-linear.
(C1) is a self-financing condition, 
where $rb_k(t) dt -db_k(t)$ is the net change in cash used to 
finance new tokens $P(t) \nu_k(t) dt$, transaction costs $N'(t) \, L\left(\frac{\nu_k(t)}{N'(t)}\right) dt$,
and consumption $dc_k(t)$.
Compared to \cite{TY23b}, there is an additional term $- N'(t) \, L\left(\frac{\nu_k(t)}{N'(t)}\right)$
representing for transaction costs:
it depends not only on the traded volume $\nu_k(t) dt$ 
but also the total volume $N'(t) dt$ (see \cite[p.43, (3.3)]{Gueant16}).
The quadratic cost $L(x) = \rho |x|^2$ with $\rho > 0$ corresponds to the {\em Almgren-Chriss model}.
We also impose the no-shorting constraint:
 \begin{equation}
b_k(0)  = 0, \quad b_k(t) \ge 0 \mbox{ for } 0 \le t \le \mathcal{T}_k, \quad 0 \le X_k(t) \le N(t) \mbox{ for } 0 \le t \le \mathcal{T}_k. \tag{C2}
 \end{equation}
Set $b_k(t) = b_k(\mathcal{T}_k)$ and $\nu_k(t) = 0$ for $t > \mathcal{T}_k$.

\smallskip
{\bf Miner's strategy if $Z(\cdot)$ is known}.
It is easy to see from \eqref{eq:ACP} that 
the price $P(t)$ (up to noise) only depends on the number of unstaked tokens $Z(t)$,
or equivalently, the number of all staked tokens $\sum_{k = 1}^K X_k(t)$.

\quad Here, suppose that each miner $k$ ``knows" the number of unstaked tokens.
The objective of miner $k$ is\footnote{In \cite{Tangsuv, TY23b}, the objective of each miner $k$ is set to be
\begin{equation*}
\begin{aligned}
\sup_{\{(\nu_k(t), b_k(t))\}} & J(\nu_k, b_k):=
\mathbb{E}\left\{ \int_0^{\mathcal{T}_k}e^{-\beta_k t} \left[dc_k(t) + \ell_k(X_k(t)) dt \right]  + e^{-\beta_k \mathcal{T}_k}  \left[b_k(\mathcal{T}_k) + h_k(X_k(\mathcal{T}_k) \right] \right\}  \\
& \mbox{ subject to } \eqref{eq:Xnu2}, \eqref{eq:ACP}, (\mbox{C}1), (\mbox{C}2),
\end{aligned}
\end{equation*}
for some utility functions $\ell_k(\cdot)$ and $h_k(\cdot)$.
The utilities are expressed as functions of the
{\em number of stakes}, as opposed to their {\em total value}. 
In this paper, we take market impact into consideration;
so it is more reasonable to include the total value in the objective.}:
\begin{equation}
\label{eq:obj11}
\begin{aligned}
\sup_{\{(\nu_k(t), b_k(t))\}} & J(\nu_k, b_k):=
\mathbb{E}\left\{ \int_0^{\mathcal{T}_k}e^{-\beta_k t} dc_k(t)   + e^{-\beta_k \mathcal{T}_k}  \left[b_k(\mathcal{T}_k) + P(\mathcal{T}_k) X_k(\mathcal{T}_k) \right] \right\} \\
& \mbox{ subject to } \eqref{eq:Xnu2}, \eqref{eq:ACP}, (\mbox{C}1), (\mbox{C}2).
\end{aligned}
\end{equation}
Our goal is to understand the collective behavior of the miners.
Without loss of generality, assume that the miners are {\em interchangeable},
so we can drop the the subscript `$k$' in \eqref{eq:obj11} to get
the objective of a {\em typical} miner:
 \begin{align}
\label{eq:obj121}
U(x):= & \sup_{\{(\nu(t), b(t))\}} J(\nu,b):= \mathbb{E}\left\{ \int_0^{\mathcal{T}}e^{-\beta t} dc(t) + e^{-\beta \mathcal{T}} \left[ b(\mathcal{T}) + P(\mathcal{T}) X(\mathcal{T}) \right]\right\} \\
& \mbox{ subject to } X'(t) = \nu(t) + \frac{N'(t)}{N(t) - Z(t)} X(t), \, X(0) = x, \tag{C0} \\
& \qquad \qquad \quad \, dc(t) =  rb(t)dt-db(t) - P(t) \nu(t) dt - N'(t) \, L\left(\frac{\nu(t)}{N'(t)} \right) dt, \tag{C1} \\
& \qquad \qquad \quad \,  b(0) = 0, \, b(t) \ge 0 \mbox{ and } 0 \le X(t) \le N(t), \tag{C2} \\
& \qquad \qquad \quad \, P(t) = P(0) + \sigma B(t) - \eta(Z(t) - Z(0)), \tag{C3}
\end{align}
where (C0) is a repeat of the state dynamics in \eqref{eq:Xnu2}, and (C3) is the price dynamics in \eqref{eq:ACP}.
Compared to \cite{TY23b}, 
the volume constraint $\nu(t) \le \overline{\nu}$ for $\overline{\nu} > 0$ is removed;
instead the transaction cost $- N'(t) \, L\left(\frac{\nu(t)}{N'(t)} \right) dt$
is introduced in the budget constraint (C1).
As a result, the miner's strategy will no longer be a bang-bang control 
but depend on the market impact mechanism.

\quad Denote
\begin{equation}
\label{eq:Pbeta}
\widetilde{P}_\beta(t): = e^{-\beta t} \mathbb{E} P(t) = e^{-\beta t} \left[ P(0) - \eta (Z(t) - Z(0)) \right] \quad \mbox{and} \quad
\widetilde{P}(t): = \widetilde{P}_0(t).
\end{equation}
A simple argument (as in \cite{RS21, TY23b}) shows that 
the consumption-investment problem \eqref{eq:obj121} is separable:
\begin{equation*}
U(x) := \sup_{\{(\nu, b)\}} J( \nu,b) = \sup_{b} J_1(b) + 
\sup_{\nu} J_2(\nu),
\end{equation*}
where $J_1(b):=(r - \beta)\int_0^{\mathcal{T}} e^{-\beta t} b(t) dt$ and
\begin{equation*}
J_2(\nu):=\int_0^\mathcal{T} \left[-\widetilde{P}_\beta(t) \nu(t) - e^{-\beta t} N'(t) L \left( \frac{\nu(t)}{N'(t)}\right) \right] dt 
+ \widetilde{P}_\beta(\mathcal{T})X(\mathcal{T}).
\end{equation*}

\quad Here we are concerned with the risk-averse setting, corresponding to $\beta \ge r$.
Then $\sup_b J_1(b) = 0$ with the optimality binding at $b_*(t) = 0$ for all $t$. 
So the problem \eqref{eq:obj121} is reduced to:
\begin{equation}
\label{eq:45}
U(x) = \sup_{\nu} J_2(\nu) \quad \mbox{subject to (C0), (C2')},
\end{equation}
where (C2') is (C2) without the constraints on $b(\cdot)$. 
We argue by dynamic programming, and let
\begin{align*}
v(t,x):= & \sup_{\{\nu(s), s \ge t\}} \int_t^{\mathcal{T}}
\left[-\widetilde{P}_\beta(s) \nu(s) - e^{-\beta s} N'(s) L \left( \frac{\nu(s)}{N'(s)}\right) \right] dt 
+ \widetilde{P}_\beta(\mathcal{T})X(\mathcal{T}) \\
& \mbox{ subject to } X'(s) = \nu(s) + \frac{N'(s)}{N(s) - Z(s)} X(s), \, X(t) = x,  \\
& \qquad \qquad \quad \, 0 \le X(s) \le N(s),
\end{align*}
so $U(x) = v(0,x)$. 

\quad Let $Q: = \{(t,x): 0 \le t < T, \, 0<x<N(t)\}$.
Under suitable conditions on the model parameters,
$v$ solves (uniquely) the HJB equation:
\begin{equation*}
\left\{ \begin{array}{lcl}
\partial_t v  + \frac{x N'(t)}{N(t)-Z(t)} \partial_x v + \sup_{\nu} \left\{\nu ( \partial_x v - \widetilde{P}_{\beta}(t)) - e^{-\beta t} N'(t) L\left(\frac{\nu}{N'(t)} \right) \right\} = 0 \quad \mbox{in } Q , \\
v(T,x) = \widetilde{P}_\beta(T)x, \\
v(t,0) = 0, \,\, v(t, N(t)) = \widetilde{P}_\beta(t)N(t).
\end{array}\right.
\end{equation*}

By optimizing $\nu \to \nu ( \partial_x v - \widetilde{P}_{\beta}(t)) - e^{-\beta t} N'(t) L\left(\frac{\nu}{N'(t)} \right)$, 
we get
$\nu_* = N'(t) (L')^{-1} \left(e^{\beta t}  \partial_x v - \widetilde{P}(t)\right)$.
This yields the following nonlinear PDE:
\begin{equation}
\label{eq:HJB}
\left\{ \begin{array}{lcl}
\partial_t v +  \frac{x N'(t)}{N(t) - Z(t)} \partial_x v + e^{-\beta t} N'(t) \bigg\{(e^{\beta t}  \partial_x v - \widetilde{P}(t)) (L')^{-1}(e^{\beta t}  \partial_x v - \widetilde{P}(t)) \\
\qquad \qquad  \qquad \qquad \qquad \qquad \qquad \qquad \, - L\left( (L')^{-1}(e^{\beta t}  \partial_x v - \widetilde{P}(t))\right) \bigg\} = 0 \quad \mbox{in }Q, \\
v(T,x) = \widetilde{P}_\beta(T)x, \\
v(t,0) = 0, \,\, v(t, N(t)) = \widetilde{P}_\beta(t)N(t).
\end{array}\right.
\end{equation}

{\em Example}: 
The most important case is the Almgren-Chriss model, with the quadratic cost $L(x) = \rho x^2$.
The PDE \eqref{eq:HJB} specializes to
\begin{equation}
\label{eq:HJBsqrt}
\left\{ \begin{array}{lcl}
\partial_t v + \frac{x N'(t)}{N(t) - Z(t)} \partial_x v + \frac{e^{\beta t} N'(t)}{4 \rho}(\partial_x v - \widetilde{P}_\beta(t))^2 = 0 \quad \mbox{in }Q, \\
v(T,x) = \widetilde{P}_\beta(T)x, \\
v(t,0) = 0, \,\, v(t, N(t)) = \widetilde{P}_\beta(t)N(t),
\end{array}\right.
\end{equation}
with the optimal strategy $\nu_*(t,x) = \frac{N'(t)}{2 \rho}\left(e^{\beta t}  \partial_x v(t,x) - \widetilde{P}(t)\right)$.

\smallskip
{\bf Mean field strategy}.
Assume that the distribution of the miners by their staked coins
are approximated by $m_0(x) dx$ at time $t =0$.
The goal is to find an equilibrium trading strategy
 $\nu^{\tiny \mbox{eq}}(t \,|\, m_0)$, or simply $\nu^{\tiny \mbox{eq}}(t)$,
which can be viewed as the aggregated trading strategy among all the miners.

\quad Now let's describe the mean field model.
\begin{tcolorbox}
\textbf{Mean field problem $(\star)$}
\begin{enumerate}[itemsep = 3 pt]
\item
Since there are $K$ miners, by \eqref{eq:Z}, the equilibrium of unstaked tokens is:
\begin{equation}
\label{eq:Zeqbis}
Z^{\tiny \mbox{eq}}(t) = Z(0) - K \int_0^t \nu^{\tiny \mbox{eq}}(s) ds,
\end{equation}
\item
Given $Z^{\tiny \mbox{eq}}(\cdot)$,
the miner's optimal strategy is
\begin{equation}
\label{eq:nueqbis}
\nu^{\tiny \mbox{eq}}_*(t,x) = N'(t) (L')^{-1} \left(e^{\beta t}  \partial_x v(t,x) - \widetilde{P}(t)\right),
\end{equation}
where $v$ is the solution to \eqref{eq:HJBsqrt} with $Z(t) = Z^{\tiny \mbox{eq}}(t)$.
\item
The feedback control of a typical miner is 
$X'(t) = \nu_*^{\tiny \mbox{eq}}(t,X(t))+ \frac{N'(t)}{N(t) - Z^{\tiny \mbox{eq}}(t)} X(t)$.
Hence, the density of the miners by their staked coins  solves the continuity equation:
\begin{equation}
\label{eq:FP}
\partial_t m + \partial_x \left( \left( \nu_*^{\tiny \mbox{eq}}(t,x) + \frac{x N'(t)}{N(t) - Z^{\tiny \mbox{eq}}(t) } \right) m  \right) = 0, \quad m(0,x) = m_0(x).
\end{equation}
\item
The equilibrium trading strategy $\nu^{\tiny \mbox{eq}}(t)$ satisfies the fixed point equation:
\begin{equation}
\label{eq:fixedpt}
\int \nu_*^{\tiny \mbox{eq}}(t,x) m(t,x) dx = \nu^{\tiny \mbox{eq}}(t).
\end{equation}
\end{enumerate}
\end{tcolorbox}

\quad Let's assume that the mean field problem $(\star)$ is well-posed. 
We are interested in the following two statistics (or features):
\begin{itemize}[itemsep = 3 pt]
\item
$Z^{\tiny \mbox{eq}}(t)$ is the number of unstaked tokens at time $t$, 
or $\frac{N(t)-Z^{\tiny \mbox{eq}}(t)}{N(t)}$ is the staking ratio
that often reflects the adoption of a token,
or activities within the token's ecosystem.
\item
$m(t,\cdot)$ is the distribution of staked coins of the miner population at time $t$.
A more-concentrated staking profile $m(t,\cdot)$ implies centralization in blockchain governance;
while a more spread-out $m(t,\cdot)$ implies decentralization of the blockchain protocol.
\end{itemize}
The interesting (and important) problem is to understand the long-time behavior of the above quantities,
i.e., as $t \to \infty$.
It is known that the PoS protocol alone does not lead to decentralization \cite{RS21, Tang22}.
Our goal is to explore whether allowing trading in the PoS protocol can yield decentralization, 
leveraging the market power.

\quad Hence, it is indispensable to first establish the well-posedness of the mean field problem $(\star)$ for {\em all time}.
As pointed out in \cite{Tangsuv}, this is equivalent to solving a first-order mean field game with nonlocal and non-separable Hamiltonian,
which is very challenging.
In the next section,
we take the first step to prove the well-posedness of this problem, at least for some (small) finite time.

\section{Well-posedness of the mean field problem}
\label{sc3}

\quad In this section, we prove the well-posedness of the mean field problem $(\star)$ for short time $T_0$, via a fixed-point approach.

\quad Let's recast the mean field problem. Fixing $M >0$, we look for an aggregated trading strategy in the set
\begin{equation*}
\mathcal{X}_M: = \{\nu \in \mathcal{C}[0,T_0]: ||\nu||_\infty \le M\}.
\end{equation*}
Write 
\begin{equation}
\label{eq:ZP}
Z_\nu(t):= Z(0) - K \int_0^t \nu(s)ds \quad \mbox{and} \quad
\widetilde{P}_{\beta, \nu}(t):=e^{-\beta t}\left(P(0) + \eta K \int_0^t \nu(s)ds \right).
\end{equation}
Choose $\overline{N} > \sup_{0 \le t \le T_0} N(t)$.
Formally, let $v_\nu(t,x)$ solve the HJB equation \eqref{eq:HJB}
with a fixed boundary\footnote{Here is a subtlety: compared to the original equation \eqref{eq:HJB},
the boundary conditions are removed in \eqref{eq:HJBbis}.
In fact, we will show in Lemma \ref{prop:inside} that under suitable condition on $m_0$,
$m(t,\cdot)$ or $X(t)$ remains (uniformly) in the interior of $Q_T$.
As a result, the boundary conditions and choice of $\overline{N}$ become irrelevant.}:
\begin{equation}
\label{eq:HJBbis}
\left\{ \begin{array}{lcl}
\partial_t v +  a_\nu(t,x) \partial_x v + g(t) f\left(e^{\beta t} (\partial_x v -\widetilde{P}_{\beta, \nu}(t))\right)= 0 \quad \mbox{in } [0,T_0) \times (0, \overline{N}), \\
v(T_0,x) = \widetilde{P}_{\beta, \nu}(T_0)x, 
\end{array}\right.
\end{equation}
where
\begin{equation}
\label{eq:fg}
a_\nu(t,x):= \frac{x N'(t)}{N(t) - Z_\nu(t)}, \quad
g(t):= e^{-\beta t} N'(t), \quad
f(p):= p (L')^{-1}(p) - L((L')^{-1}(p)).
\end{equation}
The corresponding feedback control is
$\mu_\nu(t,x):=N'(t) (L')^{-1} \left(e^{\beta t} (\partial_x v_\nu -\widetilde{P}_{\beta, \nu}(t))\right)$.
Let $m_\nu(t,x)$ solve the continuity equation \eqref{eq:FP}:
\begin{equation}
\label{eq:FPbis}
\partial_t m + \partial_x(b_\nu(t,x) m) = 0, \quad m(0,x) = m_0(x),
\end{equation}
where
\begin{equation*}
b_\nu(t,x):= \mu_\nu(t,x) + a_\nu(t,x).
\end{equation*}
Finally, define the fixed-point map
\begin{equation}
\label{eq:fixedmap}
(\mathcal{T}\nu)(t):= \int_0^{\overline{N}} \mu_\nu(t,x) m_\nu(t,x) dx.
\end{equation}
A fixed point of $\mathcal{T}$ on $\mathcal{X}_M$ gives a solution to the mean field problem $(\star)$.

\quad The main result is stated as follows. 
\begin{theorem}
\label{thm:wellp}
Assume that
\begin{enumerate}[itemsep = 3 pt]
\item
$N \in \mathcal{C}^2([0,T_0])$, with $N(0) > 0$ and $N'(t) \ge 0$ for all $t$. 
\item
$\delta_0:= N(0) - Z(0) > 0$.
\item
$L \in  \mathcal{C}^2(\mathbb{R})$, with $(L')^{-1}$ being Lipschitz.
\item
$m_0 \in \mathcal{C}([0, \overline{N}])$, with $m_0 \ge 0$, $\int_0^{\overline{N}} m_0(x) dx = 1$ and 
$\operatorname{supp} m_0 \subset [\varepsilon_0, N(0) - \varepsilon_0]$ for some $\varepsilon_0 > 0$.
\end{enumerate}
For $M$ sufficiently large and $T_0$ sufficiently small, 
the map $\mathcal{T}$ is given by
\begin{equation}
\label{eq:keyf}
(\mathcal{T}\nu)(t) =  N'(t) (L')^{-1} \left(e^{\beta t} \left(\widetilde{P}_{\beta, \nu}(T_0)e^{\int_t^{T_0} a_\nu(s) ds} -\widetilde{P}_{\beta, \nu}(t)\right)\right),
\end{equation}
and it has a unique fixed point $\nu^{\tiny \mbox{eq}} \in \mathcal{X}_M$.
The fixed problem $nu^{\tiny \mbox{eq}}$ is the equilibrium trading strategy of the mean field problem $(\star)$,
and the corresponding $v(t,x)$ and $m(t,x)$ are given by \eqref{eq:vformula} and \eqref{eq:mfor} respectively,
with $\nu =\nu^{\tiny \mbox{eq}}$.
\end{theorem}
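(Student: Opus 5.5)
The plan is to exploit the fact that the terminal datum in \eqref{eq:HJBbis} is linear in $x$ and the drift $a_\nu(t,x)$ is linear in $x$. This suggests that $v_\nu$ is affine in $x$, so that the feedback $\mu_\nu$ does not depend on $x$ at all. Then the fixed-point map reduces to a map on functions of $t$ alone, and a standard contraction argument applies.

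\textbf{Step 1: solve the HJB equation explicitly.} Write $a_\nu(t,x) = x\,a_\nu(t)$ with $a_\nu(t):=N'(t)/(N(t)-Z_\nu(t))$; this is the meaning of $a_\nu(s)$ in \eqref{eq:keyf}. I will make the ansatz $v_\nu(t,x) = p(t)x + q(t)$ and substitute it into \eqref{eq:HJBbis}. Matching the coefficients of $x$ and of $1$ gives
\[
p' + a_\nu(t)p = 0,\quad p(T_0)=\widetilde P_{\beta,\nu}(T_0), \qquad q' = -g(t) f\bigl(e^{\beta t}(p(t)-\widetilde P_{\beta,\nu}(t))\bigr),\quad q(T_0)=0 .
\]
Hence
\[
p(t)=\widetilde P_{\beta,\nu}(T_0)e^{\int_t^{T_0}a_\nu(s)ds},
\]
and $q$ is obtained by quadrature. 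This is the formula \eqref{eq:vformula}.

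Uniqueness within the relevant class (Lipschitz in $x$, with linear growth) would follow from the comparison principle for first-order Hamilton--Jacobi equations. Alternatively, one can argue by characteristics: differentiating the equation in $x$ gives a linear transport equation for $\partial_x v$ with constant terminal data, so $\partial_x v \equiv p(t)$.

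\textbf{Step 2: derive the formula for $\mathcal{T}$.} The feedback is then $\mu_\nu(t)=N'(t)(L')^{-1}\bigl(e^{\beta t}(p(t)-\widetilde P_{\beta,\nu}(t))\bigr)$, which is independent of $x$. The characteristics of \eqref{eq:FPbis} solve $X'=\mu_\nu(t)+a_\nu(t)X$ explicitly, and $m_\nu$ is the push-forward of $m_0$ along them; this gives \eqref{eq:mfor}. Then
\[
(\mathcal{T}\nu)(t)=\mu_\nu(t)\int_0^{\overline N} m_\nu(t,x)\,dx=\mu_\nu(t),
\]
which is exactly \eqref{eq:keyf}, provided no mass leaves $(0,\overline N)$.

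This is where I need the support to stay inside $[\varepsilon_0/2,\,N(t)-\varepsilon_0/2]$, which is the content of Lemma \ref{prop:inside}. I would prove it as follows. Since $\|\nu\|_\infty\le M$, we have $N(t)-Z_\nu(t)\ge \delta_0 - KMT_0\ge\delta_0/2$ once $T_0\le \delta_0/(2KM)$. Then $a_\nu$ is bounded by $2\|N'\|_\infty/\delta_0$, and $|\mu_\nu|$ is bounded by a constant depending on $M$. So the characteristics move by $O(T_0)$, which is less than $\varepsilon_0/2$ for $T_0$ small.

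\textbf{Step 3: self-map and contraction on $\mathcal X_M$.} For the self-map property, note that for $t\le T_0$ with $T_0$ small, both $\widetilde P_{\beta,\nu}(T_0)e^{\int_t^{T_0}a_\nu}$ and $\widetilde P_{\beta,\nu}(t)$ lie within $O(T_0(1+M))$ of $P(0)$. Hence
\[
\|\mathcal{T}\nu\|_\infty\le \|N'\|_\infty\bigl(|(L')^{-1}(0)| + \mathrm{Lip}((L')^{-1})\,C\,T_0(1+M)\bigr).
\]
I will first choose $M$ larger than twice the $T_0$-independent part of this bound, and then choose $T_0$ small enough (depending on $M$) that the bound is at most $M$.

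For the contraction, I will compare $\nu_1,\nu_2\in\mathcal X_M$. We have
\[
|\widetilde P_{\beta,\nu_1}(t)-\widetilde P_{\beta,\nu_2}(t)|\le \eta K T_0\|\nu_1-\nu_2\|_\infty,
\]
and $|a_{\nu_1}-a_{\nu_2}|\le C(\delta_0)KT_0\|\nu_1-\nu_2\|_\infty$ because the denominators are bounded below by $\delta_0/2$. The exponential factor is bounded, so the Lipschitz property of $(L')^{-1}$ gives $\|\mathcal{T}\nu_1-\mathcal{T}\nu_2\|_\infty\le C_M T_0\|\nu_1-\nu_2\|_\infty$. Shrinking $T_0$ further makes $\mathcal{T}$ a contraction, and Banach's fixed-point theorem gives a unique fixed point.

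\textbf{Expected obstacle.} I expect the delicate part to be the interplay of constants: $M$ must be fixed before $T_0$, while the interior-support lemma, the lower bound on $N-Z_\nu$, and the contraction constant all depend on $M$. Consistency works because the leading term of the bound on $\mathcal{T}\nu$ does not depend on $M$. A secondary point is justifying that the explicit affine $v_\nu$ is the relevant solution of \eqref{eq:HJBbis} with the boundary conditions removed. I would handle this by noting that, thanks to Step 2, only the values of $v_\nu$ along the characteristics starting in $\operatorname{supp} m_0$ matter.
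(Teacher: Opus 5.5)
Your proposal is correct and follows essentially the paper's route: an explicit $x$-affine solution of the HJB equation giving an $x$-independent feedback $\mu_\nu$, the push-forward formula for $m_\nu$ together with the interior-support lemma to get $\mathcal T\nu=\mu_\nu$, and then self-map and $O(T_0)$-contraction estimates on $\mathcal X_M$, with $M$ fixed before $T_0$ and Banach's theorem; the only real differences are that you obtain $v_\nu$ by an affine ansatz instead of differentiating in $x$ and integrating the characteristic system for $\partial_x v$, and that your self-map bound is slightly sharper. One small caution: with the boundary conditions removed on $(0,\overline N)$, a comparison principle alone does not give uniqueness, since characteristics near $x=\overline N$ can exit the domain before $T_0$. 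That point must instead be carried by your characteristic/domain-of-dependence argument: only values along characteristics from $\operatorname{supp} m_0$ matter, and these stay interior up to $T_0$.
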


\quad The key takeaway of the theorem is that under the linear terminal condition $v(T_0,x) = \widetilde{P}_{\beta, \nu}(T_0)x$,
the HJB equation \eqref{eq:HJBbis} is solvable. 
In particular, the miner's optimal strategy \eqref{eq:nueqbis} is:
\begin{equation*}
\nu^{\tiny \mbox{eq}}_*(t,x) =  N'(t) (L')^{-1} \left(e^{\beta t} \left(\widetilde{P}_{\beta, \nu}(T_0)e^{\int_t^{T_0} a_\nu(s) ds} -\widetilde{P}_{\beta, \nu}(t)\right)\right), \quad \mbox{with } \nu = \nu^{\tiny \mbox{eq}},
\end{equation*}
which is independent of the state variable $x$.
That is, the miner's trading decision is independent of her staked tokens.
So the  problem is to find $\nu(\cdot)$ such that
\begin{equation}
\label{eq:finalp}
N'(t) (L')^{-1} \left(e^{\beta t} \left(\widetilde{P}_{\beta, \nu}(T_0)e^{\int_t^{T_0} a_\nu(s) ds} -\widetilde{P}_{\beta, \nu}(t)\right)\right) = \nu(t).
\end{equation}
The proof of Theorem \ref{thm:wellp} is given in the next subsections.

\subsection{Solving the equations for fixed $\nu$}

We prove that the fixed-point map \eqref{eq:fixedmap} is well-defined
by solving the HJB equation \eqref{eq:HJBbis} and the continuity equation \eqref{eq:FPbis}.

\quad We start by showing that the denominator $N(t) - Z_\nu(t)$ in the coefficients $a_\nu(t,x)$, $b_\nu(t,x)$ 
stays away from zero for some short time $T_0$.
\begin{lemma}
\label{lem:zero}
Let $\nu \in \mathcal{X}_M$ and $\delta_0:= N(0) - Z(0)>0$.
Assume that $N$ is nondecreasing.
For $T_0 \le \frac{\delta_0}{2KM}$, we have:
\begin{equation*}
\inf_{0 \le t \le T_0} \{N(t) -Z_\nu(t)\} \ge \frac{\delta_0}{2}.
\end{equation*}
\end{lemma}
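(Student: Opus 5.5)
The plan is to unpack the definition \eqref{eq:ZP} of $Z_\nu$ and bound the drift term using $\|\nu\|_\infty \le M$ together with the monotonicity of $N$. For $t \in [0,T_0]$,
\begin{equation*}
N(t) - Z_\nu(t) = N(t) - Z(0) + K\int_0^t \nu(s)\,ds .
\end{equation*}
I would handle the two contributions separately.

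\emph{Token supply term.} Since $N$ is nondecreasing, $N(t) \ge N(0)$. Hence
\begin{equation*}
N(t) - Z(0) \ge N(0) - Z(0) = \delta_0 .
\end{equation*}

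\emph{Trading term.} Since $\nu \in \mathcal{X}_M$, we have $|\nu(s)| \le M$ for all $s \in [0,T_0]$. Therefore
\begin{equation*}
K\int_0^t \nu(s)\,ds \ge -KMt \ge -KMT_0 \ge -\frac{\delta_0}{2},
\end{equation*}
where the last inequality uses the hypothesis $T_0 \le \frac{\delta_0}{2KM}$.

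Adding the two bounds gives $N(t) - Z_\nu(t) \ge \delta_0 - \frac{\delta_0}{2} = \frac{\delta_0}{2}$ for every $t \in [0,T_0]$. Taking the infimum over $t$ yields the claim.

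There is no real obstacle here. The only point to watch is that $\nu$ may be negative (miners unstaking), so the integral has to be bounded below by $-KMt$, and the smallness of $T_0$ is exactly what absorbs this term. This uniform lower bound on $N(t)-Z_\nu(t)$ is what later makes $a_\nu$ and $b_\nu$ bounded and Lipschitz on $[0,T_0]\times[0,\overline{N}]$.
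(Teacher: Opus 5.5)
Your proof is correct and follows the paper's argument: expand $N(t)-Z_\nu(t)$ via \eqref{eq:ZP}, use $N(t)\ge N(0)$ and $\|\nu\|_\infty\le M$ to get the lower bound $\delta_0-KMt$, then absorb the trading term using $T_0\le \frac{\delta_0}{2KM}$.
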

\begin{proof}
Note that
\begin{equation*}
\begin{aligned}
N(t) - Z_\nu(t) &= N(t) - Z(0) + K \int_0^t \nu(s) ds \\
& \ge \delta_0 - KMt,
\end{aligned}
\end{equation*}
where the inequality follows from the fact that $N$ is nondecreasing
and $||\nu||_\infty \le M$.
Thus, $N(t) - Z_\nu(t) \ge \delta_0/2$ for any $t \le \frac{\delta_0}{2KM}$.
 \end{proof}

\quad Next we solve explicitly the HJB equation \eqref{eq:HJBbis}.
\begin{proposition}
\label{prop:HJB}
Let the assumptions in Lemma \ref{lem:zero} hold,
and assume that $N\in \mathcal{C}^2([0,T_0])$ 
and $(L')^{-1}$ is Lipschitz.
We have:
\begin{equation}
\label{eq:vformula}
\begin{aligned}
v_\nu(t,x) &= \widetilde{P}_{\beta, \nu}(T_0)e^{\int_t^{T_0} a_\nu(s) ds}x + \int_t^T  g(s) f\left(e^{\beta s} \left(\widetilde{P}_{\beta, \nu}(T_0) e^{\int_s^{T_0} a_\nu(r) dr} -\widetilde{P}_{\beta, \nu}(s)\right)\right) ds,
\end{aligned}
\end{equation}
where $f$ and $g$ are defined by \eqref{eq:fg}.
Consequently, the feedback control is
\begin{equation}
\label{eq:munu}
\mu_\nu(t,x) = \mu_\nu(t):=N'(t) (L')^{-1} \left(e^{\beta t} \left(\widetilde{P}_{\beta, \nu}(T_0)e^{\int_t^{T_0} a_\nu(s) ds} -\widetilde{P}_{\beta, \nu}(t)\right)\right).
\end{equation}
\end{proposition}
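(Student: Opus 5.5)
Since the drift $a_\nu(t,x)$ is linear in $x$ and the terminal datum is linear, we look for a solution of \eqref{eq:HJBbis} affine in $x$: $v_\nu(t,x) = A(t)x + B(t)$. We write $a_\nu(t,x) = x\,\alpha_\nu(t)$ with $\alpha_\nu(t):= N'(t)/(N(t)-Z_\nu(t))$, which is how we read $a_\nu(s)$ in \eqref{eq:vformula}. By Lemma \ref{lem:zero} the denominator is at least $\delta_0/2$ on $[0,T_0]$, so $\alpha_\nu$ is continuous and bounded there. Since $N\in\mathcal{C}^2$ and $Z_\nu\in\mathcal{C}^1$, it is in fact $\mathcal{C}^1$.

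Substituting the ansatz gives $\partial_x v = A(t)$, independent of $x$. The equation becomes
\[
A'(t)x + B'(t) + \alpha_\nu(t)A(t)x + g(t) f\bigl(e^{\beta t}(A(t) - \widetilde{P}_{\beta,\nu}(t))\bigr) = 0 .
\]
Matching powers of $x$ yields two equations. The first is the linear ODE $A' = -\alpha_\nu A$ with $A(T_0) = \widetilde{P}_{\beta,\nu}(T_0)$, so $A(t) = \widetilde{P}_{\beta,\nu}(T_0)e^{\int_t^{T_0}\alpha_\nu(s)ds}$. The second is $B' = -g f(e^{\beta t}(A - \widetilde{P}_{\beta,\nu}))$ with $B(T_0)=0$, whose solution is the integral term in \eqref{eq:vformula}, with the upper limit read as $T_0$.

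We then check that this $v_\nu$ is a genuine $\mathcal{C}^1$ (classical) solution. For this we need $f$ continuous. Here $f$ is the Legendre transform of $L$, and it is continuous, indeed $\mathcal{C}^1$ with $f'=(L')^{-1}$, because $(L')^{-1}$ is Lipschitz and $L\in\mathcal{C}^2$. Continuity of $g$, $\widetilde{P}_{\beta,\nu}$ and $A$ then makes $B$ a $\mathcal{C}^1$ function. Differentiating confirms the equation pointwise on $[0,T_0)\times(0,\overline N)$.

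Uniqueness should be recorded so that ``the'' solution is meaningful. One route is a comparison principle for this first-order HJB: the Hamiltonian $p\mapsto g f(e^{\beta t}(p-\widetilde P_{\beta,\nu}))$ is locally Lipschitz, and the drift is Lipschitz in $x$. The other route is a verification argument: the problem is deterministic control with convex running cost, and the affine $v_\nu$ is exactly the value function. This is where some care is needed, since \eqref{eq:HJBbis} carries no lateral boundary condition. I would either restrict to the class of $\mathcal{C}^1$ solutions with bounded gradient, or defer the boundary issue to the later lemma showing trajectories stay inside, as the footnote indicates.

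Finally, the feedback control follows by plugging $\partial_x v_\nu = A(t)$ into $\mu_\nu = N'(t)(L')^{-1}(e^{\beta t}(\partial_x v_\nu - \widetilde P_{\beta,\nu}))$. This gives \eqref{eq:munu}, which is manifestly independent of $x$. The main obstacle is the uniqueness and verification justification; the explicit computation itself is routine.
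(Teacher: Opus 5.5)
Your proof is correct and gives exactly \eqref{eq:vformula} and \eqref{eq:munu}, reading the upper limit $T$ as $T_0$ and $a_\nu(s)$ as $N'(s)/(N(s)-Z_\nu(s))$, as you do. It takes a different route from the paper.

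The paper differentiates \eqref{eq:HJBbis} in $x$ and solves the quasi-linear equation \eqref{eq:quasi} for $q=\partial_x v_\nu$ by characteristics, using Gr\"onwall bounds on $q$ and on the speed $c_\nu$ to rule out blow-up. It observes that along every characteristic $q$ obeys the same $x$-independent ODE $q'=-a_\nu(t)q$ with the same constant terminal value, and then integrates in $x$.

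Your affine ansatz reaches the formula directly. It also comes with a pointwise check that $v_\nu\in\mathcal{C}^1$ is a classical solution, whereas the paper's derivation is only formal. And it shows the characteristic machinery is unnecessary, since the $q$-equation never sees $x$.

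The paper's route buys two things in return. First, it starts from an arbitrary smooth solution, so it carries uniqueness content: the gradient must equal the formula along any characteristic that reaches $t=T_0$ inside the domain. Second, it produces the uniform-in-$\nu$ bounds on $q_\nu$ and $c_\nu$ that are reused in Lemma~\ref{prop:inside}. In your version these bounds can be read off the explicit $A(t)$.

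One caution on your uniqueness remarks. Restricting to $\mathcal{C}^1$ solutions with bounded gradient does not give uniqueness on $(0,\overline N)$ without lateral data, and neither does a comparison principle. Characteristics with positive speed $a_\nu(t)\overline N+\mu_\nu(t)$ at $x=\overline N$ leave the domain before $T_0$, so one can prescribe different smooth data for $q$ there and obtain other classical solutions.

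The right statement is uniqueness on the domain of determinacy, or equivalently your verification argument for trajectories that stay inside. Together with Lemma~\ref{prop:inside}, that is all the fixed-point argument needs. The paper does not treat this beyond its footnote either, so it is not a gap relative to the paper.
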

\begin{proof}
Formally, differentiating \eqref{eq:HJBbis} with respect to $x$ yields
the quasi-linear equation for $q_\nu:= \partial_x v_\nu$:
\begin{equation}
\label{eq:quasi}
\partial_t q + c_\nu(t,x,q) \partial_x q + a_\nu(t)q =0, \quad q(T_0,x) = \widetilde{P}_{\beta, \nu}(T_0),
\end{equation}
where $a_\nu(t):= \partial_x a_\nu(t,x) =\frac{N'(t)}{N(t) - Z(t)}$ and 
\begin{equation*}
\begin{aligned}
c_\nu(t,x,q):&= a_\nu(t,x) + N'(t)f'\left(e^{\beta t} (q -\widetilde{P}_{\beta, \nu}(t))\right) \\
& = a_\nu(t,x) +  N'(t)(L')^{-1}\left(e^{\beta t} (q -\widetilde{P}_{\beta, \nu}(t))\right).
\end{aligned}
\end{equation*}
By Lemma \ref{lem:zero}, the coefficients $c_\nu(t,x,q)$ and $a_\nu(t)$ are well-defined,
continuous in $(t,x,q)$ and Lipschitz in $(x,q)$.
Thus, the characteristic system (see \cite[Chapter 3]{Evans})
\begin{equation}
\label{eq:char}
\left\{
\begin{aligned}
x'(t)&=c_\nu(t,x(t),q(t)),\\
q'(t)&=-\alpha_\nu(t)q(t),
\end{aligned}
\right.
\qquad
x(0)=x,
\quad q(T_0)=\widetilde{P}_{\beta, \nu}(T_0),
\end{equation}
has a unique local solution. 
Since $N$ is smooth, we get $|a_\nu(t)| \le \frac{2 ||N'||_\infty}{\delta}$.
By Gr\"{o}nwall's inequality,
\begin{equation*}
||q_{\nu}||_\infty \le \widetilde{P}_{\beta, \nu}(T_0) e^{\frac{2 ||N'||_\infty T_0}{\delta}}
\le  (|P(0)| + \eta K M T_0) e^{(\frac{2 ||N'||_\infty}{\delta} - \beta )T_0}.
\end{equation*}
Further by the fact that $(L')^{-1}$ is Lipschitz (so it has at most linear growth),
we get:
\begin{equation*}
|c_\nu(t,x,q)| \le \frac{2 ||N'||_\infty}{\delta} |x| + C(1 + |q|),
\end{equation*}
for some $C > 0$ independent of $\nu$.
Again Gr\"{o}nwall's inequality guarantees that $x_\nu(\cdot)$ does not blow up.
As a result, the solution is global:
\begin{equation*}
q_\nu(t,x) = \widetilde{P}_{\beta, \nu}(T_0) e^{\int_t^{T_0} a_\nu(s) ds},
\end{equation*}
and 
$v_\nu(t,x) = \int_t^T  g(s) f\left(e^{\beta s} (q_\nu(s,0) -\widetilde{P}_{\beta, \nu}(s))\right) ds+ \int_0^x q_\nu(t,y)dy$,
which proves \eqref{eq:vformula}.
\end{proof}

\quad Observe a subtlety: compared to the original equation \eqref{eq:HJB},
the boundary conditions are removed in \eqref{eq:HJBbis}.
Here we argue that the support of $m(t,\cdot)$ remains in the interior for short time,
so the choice of $\overline{N}$ becomes irrelevant. 

\begin{lemma}
\label{prop:inside}
Let the assumptions in Proposition \ref{prop:HJB} hold, and 
assume that $\operatorname{supp} m_0 \subset [\varepsilon_0, N(0) - \varepsilon_0]$ for some $\varepsilon_0 > 0$.
There is  $T_0 > 0$ such that for each $\nu \in \mathcal{X}_M$, 
the characteristic $x_\nu(\cdot)$ defined by \eqref{eq:char} starting from $x \in \operatorname{supp} m_0$  satisfy
\begin{equation*}
x_\nu(t) \in \left[ \frac{\varepsilon_0}{2}, N(t)-\frac{\varepsilon_0}{2}\right] \quad \mbox{for all } t \in [0,T_0].
\end{equation*}
\end{lemma}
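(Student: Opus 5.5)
We need to show that characteristics starting in $[\varepsilon_0, N(0)-\varepsilon_0]$ stay within $[\varepsilon_0/2, N(t)-\varepsilon_0/2]$ for short time, uniformly in $\nu\in\mathcal{X}_M$. The plan is to bound the velocity $c_\nu$ along characteristics uniformly in $\nu$, and then integrate.

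\textbf{Step 1: a uniform velocity bound.} By Proposition \ref{prop:HJB}, along the characteristic we have $q_\nu(t)=\widetilde{P}_{\beta,\nu}(T_0)e^{\int_t^{T_0}a_\nu(s)ds}$, independent of $x$. The velocity therefore splits as
\begin{equation*}
x_\nu'(t)=a_\nu(t)\,x_\nu(t)+\mu_\nu(t),
\end{equation*}
where $a_\nu(t)=N'(t)/(N(t)-Z_\nu(t))$ and $\mu_\nu$ is given by \eqref{eq:munu}. Fix $T_0\le \min(1,\frac{\delta_0}{2KM})$. Lemma \ref{lem:zero} then gives $0\le a_\nu(t)\le A:=2\|N'\|_\infty/\delta_0$. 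Since $(L')^{-1}$ is Lipschitz and $L$ is even, $(L')^{-1}(0)=0$, so $|(L')^{-1}(p)|\le \mathrm{Lip}\,|p|$. The bounds on $q_\nu$ and on $\widetilde{P}_{\beta,\nu}$ from \eqref{eq:ZP}, namely $|\widetilde{P}_{\beta,\nu}(t)|\le |P(0)|+\eta KM$, then give
\begin{equation*}
|\mu_\nu(t)|\le C_1:=\|N'\|_\infty\,\mathrm{Lip}\,e^{\beta}\bigl(|P(0)|+\eta KM\bigr)(e^{A}+1),
\end{equation*}
a constant independent of $\nu$ and of $T_0\le 1$.

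\textbf{Step 2: a priori bound and the lower barrier.} As long as $x_\nu\in[0,\overline{N}]$, Gr\"onwall gives $|x_\nu(t)-x|\le (A\overline{N}+C_1)t$. Choosing $T_0\le \varepsilon_0/(2(A\overline{N}+C_1))$, together with a continuity argument to justify the a priori range, ensures that $x_\nu(t)\ge x-\varepsilon_0/2\ge \varepsilon_0/2$.

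\textbf{Step 3: the upper barrier.} This is the step needing care, since $N(t)$ moves. However, $N$ is nondecreasing, so $N(t)-\varepsilon_0/2\ge N(0)-\varepsilon_0/2$. It therefore suffices to have $x_\nu(t)\le N(0)-\varepsilon_0/2$, which follows from the same displacement bound because $x\le N(0)-\varepsilon_0$.

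The main obstacle is keeping every constant independent of $\nu$, and in particular the circularity that $T_0$ must be small relative to $M$ through $C_1$. Since $M$ is fixed first and $T_0$ is then chosen depending on $M$, this causes no problem. The final choice is
\begin{equation*}
T_0=\min\Bigl\{1,\ \frac{\delta_0}{2KM},\ \frac{\varepsilon_0}{2(A\overline{N}+C_1)}\Bigr\}.
\end{equation*}
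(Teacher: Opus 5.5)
Your proposal is correct and follows the same route as the paper. In both, you bound the characteristic speed $a_\nu(t)x+\mu_\nu(t)$ uniformly in $\nu$, integrate to get $|x_\nu(t)-x|\le Ct$, take $T_0\le \varepsilon_0/(2C)$, and then use $x\ge\varepsilon_0$ for the lower barrier and $x\le N(0)-\varepsilon_0$ with $N$ nondecreasing for the upper one. Your version is more careful than the paper's: you make the constants explicit and treat the linear $x$-dependence of the speed with a continuity argument, whereas the paper simply asserts $|c_\nu|\le C$ independent of $\nu$.
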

\begin{proof}
It is easy to deduce from the proof of Proposition \ref{prop:HJB} that
\begin{equation*}
|c(t,x,q_\nu(t,x))| \le C \quad \mbox{for some } C > 0 \mbox{ independent of } \nu.
\end{equation*}
So $|x'_\nu(t)| \le C$, and hence $|x_\nu(t) - x| \le Ct$.
Choosing $T_0 \le \frac{\varepsilon_0}{2C}$ yields
\begin{equation*}
x_\nu(t) \ge x - Ct \ge \varepsilon_0- \frac{\varepsilon_0}{2} = \frac{\varepsilon_0}{2},
\end{equation*}
because $x \ge \varepsilon_0$.
For the upper bound, we have:
\begin{equation*}
N(t) - x_\nu(t) \ge N(0) - x - Ct  \ge \varepsilon_0 - \frac{\varepsilon_0}{2} = \frac{\varepsilon_0}{2},
\end{equation*}
because $N(\cdot)$ is nondecreasing and $x \le N(0) - \varepsilon_0$.
Thus, $x_\nu(t)$ remains in the interior of $Q = \{(t,x): 0 \le t < T_0, \, 0<x<N(t)\}$.
\end{proof}

\quad The following proposition solves the continuity equation \eqref{eq:FPbis}.
\begin{proposition}
\label{prop:FP}
Let the assumptions in Lemma \ref{prop:inside} hold.
For each $\nu \in \mathcal{X}_M$, 
the continuity equation \eqref{eq:FPbis} has a unique classical solution $m \in \mathcal{C}([0,T_0]; L^1(0, \overline{N}))$ 
given by the push-forward of $m_0$ by the flow $b_\nu$:
\begin{equation}
\label{eq:mfor}
m(t,x) = m_0\left( e^{-\int_0^t a_\nu(s) ds}x -  \int_0^t e^{-\int_0^s a_\nu(r) dr} \mu_\nu(s)ds\right) e^{-\int_0^t a_\nu(s)ds}.
\end{equation}
In particular,
$m(t, \cdot) \ge 0$ and $\int_{\frac{\varepsilon_0}{2}}^{N(0) - \frac{\varepsilon_0}{2}} m(t,x) dx = 1$.
\end{proposition}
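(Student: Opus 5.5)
The plan is to use the fact, established in Proposition \ref{prop:HJB}, that the drift does not depend on $x$ except through a linear term. Indeed $\mu_\nu(t,x)=\mu_\nu(t)$ and $a_\nu(t,x)=a_\nu(t)x$, where $a_\nu(t)=N'(t)/(N(t)-Z_\nu(t))$. Hence $b_\nu(t,x)=a_\nu(t)x+\mu_\nu(t)$ is affine in $x$ with continuous coefficients. This holds because $\mu_\nu$ is continuous: it is a composition of continuous functions with the Lipschitz map $(L')^{-1}$. Lemma \ref{lem:zero} keeps $a_\nu$ bounded on $[0,T_0]$. So $b_\nu$ is globally Lipschitz in $x$, uniformly in $t$, and the flow is explicit.

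\textbf{Step 1.} Solve the characteristic ODE $x'(t)=a_\nu(t)x(t)+\mu_\nu(t)$ with $x(0)=y$ by variation of constants:
\[
X_t(y)=e^{\int_0^t a_\nu(s)ds}\Big(y+\int_0^t e^{-\int_0^s a_\nu(r)dr}\mu_\nu(s)\,ds\Big).
\]
For each $t$ the map $y\mapsto X_t(y)$ is an affine bijection of $\mathbb{R}$. Its inverse is
\[
X_t^{-1}(x)=e^{-\int_0^t a_\nu}x-\int_0^t e^{-\int_0^s a_\nu}\mu_\nu(s)\,ds,
\]
and its Jacobian is $e^{\int_0^t a_\nu}$.

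\textbf{Step 2.} Define $m(t,\cdot)=(X_t)_\#m_0$. By the change-of-variables formula this push-forward has density $m_0(X_t^{-1}(x))\,|\partial_x X_t^{-1}(x)|$, which is exactly \eqref{eq:mfor}.

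\textbf{Step 3.} Verify the equation, in the weak/mild sense. For $\varphi\in\mathcal{C}^1_c$ we have $\int\varphi\, m(t)=\int\varphi(X_t(y))m_0(y)\,dy$. Differentiating in $t$ gives $\int\partial_x\varphi\, b_\nu\, m(t)$, which is the weak form of \eqref{eq:FPbis}. One can also check the equation pointwise on the characteristics: along them $\frac{d}{dt}m=-(\partial_x b_\nu)m=-a_\nu(t)m$, which matches the factor $e^{-\int_0^t a_\nu}$. Continuity into $L^1$ follows from the continuity of $m_0$, the uniform continuity of $X_t^{-1}$ in $t$, and dominated convergence. Nonnegativity is immediate from $m_0\ge0$.

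\textbf{Step 4.} Prove uniqueness. This is the step I expect to need the most care, because $m_0$ is only continuous, so the solution is not $\mathcal{C}^1$. I would argue by duality for the linear transport equation with a Lipschitz field. Take the difference $w$ of two solutions in $\mathcal{C}([0,T_0];L^1)$ with zero initial data. For any test function $\psi$, solve the backward transport equation $\partial_t\varphi+b_\nu\partial_x\varphi=0$ with $\varphi(t_1)=\psi$. Its solution is explicitly $\varphi(t,x)=\psi(X_{t_1}\circ X_t^{-1}(x))$, which is smooth since the flow is affine. Testing $w$ against $\varphi$ gives $\int\psi\, w(t_1)=0$, hence $w\equiv0$. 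Equivalently, I could cite the standard uniqueness of measure-valued solutions for Lipschitz vector fields.

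\textbf{Step 5.} Mass and support. The push-forward preserves total mass $1$. By Lemma \ref{prop:inside}, $X_t(\operatorname{supp}m_0)\subset[\varepsilon_0/2,\,N(t)-\varepsilon_0/2]$. The characteristics of \eqref{eq:char} coincide with those used here, since $c_\nu(t,x,q_\nu)=b_\nu(t,x)$. So $m(t,\cdot)$ is supported inside the interior of $Q$, and in particular inside $(0,\overline{N})$. The boundary of $[0,\overline{N}]$ is therefore never reached, which justifies working on that bounded interval without boundary conditions and yields the stated normalization, integrating over the support interval given by Lemma \ref{prop:inside}.
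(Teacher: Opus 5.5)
\textbf{Assessment.} Your construction is the paper's: the affine field $b_\nu(t,x)=a_\nu(t)x+\mu_\nu(t)$, the explicit flow by variation of constants, the push-forward with Jacobian factor $e^{-\int_0^t a_\nu}$, and the support bound from Lemma \ref{prop:inside}. You add a weak-form verification and an actual uniqueness argument, where the paper only asserts ``the unique distributional solution''.

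\textbf{The gap: uniqueness on $(0,\overline{N})$.} Your uniqueness argument fails on the interval $(0,\overline{N})$ on which \eqref{eq:FPbis} and the class $\mathcal{C}([0,T_0];L^1(0,\overline{N}))$ are posed. Your dual test function $\varphi(t,\cdot)=\psi(X_{t_1}\circ X_t^{-1}(\cdot))$ is supported in $X_t(X_{t_1}^{-1}(\operatorname{supp}\psi))$. For $t<t_1$ this set leaves $(0,\overline{N})$ whenever backward characteristics exit through an endpoint. For example, at $x=0$ we have $b_\nu(t,0)=\mu_\nu(t)$, which is positive as soon as miners buy. Then $\varphi$ is not an admissible test function.

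In fact uniqueness is false without a boundary condition. Already for $\partial_t m+\mu_0\partial_x m=0$ on $(0,\overline{N})$ with $\mu_0>0$, take $h$ continuous with $h(0)=0$. Then $h(t-x/\mu_0)\mathbf{1}_{\{x<\mu_0 t\}}$ is a nonzero distributional solution in $\mathcal{C}([0,T_0];L^1(0,\overline{N}))$ with zero initial datum. Your Step 5 only shows that the \emph{constructed} solution never reaches the boundary. It does not exclude competitors fed through the inflow part of the boundary, so it does not justify dropping boundary conditions for uniqueness.

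\textbf{How to fix it.} You need to specify the solution concept. Either of the following works.
\begin{enumerate}
\item Pose \eqref{eq:FPbis} on $\mathbb{R}$ with $m_0$ extended by zero. The affine field is globally Lipschitz, so your duality argument (or the standard theory you mention) goes through verbatim. The support bound then places $m(t,\cdot)$ inside $(0,\overline{N})$.
\item Use the no-flux weak formulation on $[0,\overline{N}]$, with test functions in $\mathcal{C}^1([0,T_0]\times[0,\overline{N}])$ that need not vanish at the endpoints. Your $\varphi$ is then admissible, and the constructed solution still qualifies because it stays away from the endpoints.
\end{enumerate}
The paper's proof has the same omission, so this is as much a defect of the statement as of your argument.

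\textbf{A smaller point: the normalization interval.} The \emph{conclusion} of Lemma \ref{prop:inside} only places $\operatorname{supp}m(t,\cdot)$ in $[\varepsilon_0/2,N(t)-\varepsilon_0/2]$. For $N(t)>N(0)$ this is larger than the interval $[\varepsilon_0/2,N(0)-\varepsilon_0/2]$ in the stated normalization. To get the stated interval, use the estimate $|X_t(y)-y|\le Ct\le\varepsilon_0/2$ from the lemma's proof. This gives $X_t(y)\le N(0)-\varepsilon_0/2$ for $y\in\operatorname{supp}m_0$.
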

\begin{proof}
Recall the expression of $\mu_\nu(t,x)$ from \eqref{eq:munu}.
Because $N \in \mathcal{C}^2([0,T_0])$ and $(L')^{-1} \in \mathcal{C}^1(\mathbb{R})$, 
we have $\mu_\nu(t,x) \in \mathcal{C}^1([0,T_0] \times [0, \overline{N}])$,
and hence,
\begin{equation*}
b_\nu(t,x) = a_\nu(t)x + \mu_\nu(t) \in \mathcal{C}^1([0,T_0] \times [0, \overline{N}])
\end{equation*}
As a result, the ordinary differential equation
\begin{equation*}
X'(t) = b_\nu(t, X(t)), \quad X(0) = x,
\end{equation*}
has a unique solution on $[0,T_0]$:
\begin{equation*}
\Phi_\nu(t;x) = e^{\int_0^t a_\nu(s) ds}x + e^{\int_0^t a_\nu(s) ds} \int_0^t e^{-\int_0^s a_\nu(r) dr} \mu_\nu(s)ds.
\end{equation*}
The unique distributional solution to \eqref{eq:FPbis} is the push-forward 
$m(t,\cdot) = \Phi_\nu(t; \cdot)_{\#}m_0$, or in density form:
\begin{equation*}
m(t,x) = m_0(\Phi^{-1}_\nu(t;x))\exp\left(-\int_0^t \partial_x b_\nu(s, \Phi(s;\Phi^{-1}_\nu(t;x)))ds\right),
\end{equation*}
which implies \eqref{eq:mfor}.
The fact that $\operatorname{supp} m(t, \cdot) \subset \left[ \frac{\varepsilon_0}{2}, N(t)-\frac{\varepsilon_0}{2}\right]$ follows from Lemma \ref{prop:inside}.
\end{proof}

\subsection{Stability estimates}
We quantify the stability of the equations \eqref{eq:HJBbis} and \eqref{eq:FPbis} with respect to the control $\nu$.

\quad We start with the Lipschitz estimates of the coefficients.
\begin{lemma}
\label{lem:coef}
Let the assumptions in Proposition \ref{prop:HJB} hold.
For $\nu_1, \nu_2 \in \mathcal{X}_M$, we have:
\begin{align}
&||Z_{\nu_1} - Z_{\nu_2}||_\infty \le K T_0 ||\nu_1 - \nu_2||_\infty, \label{Z}\\
&||\widetilde{P}_{\beta, \nu_1} - \widetilde{P}_{\beta, \nu_2}||_\infty \le \eta K T_0 ||\nu_1 - \nu_2||_\infty, \label{P} \\
&||a_{\nu_1} - a_{\nu_2} ||_\infty \le \frac{4 \overline{N} ||N'||_\infty K T_0}{\delta_0^2} ||\nu_1 - \nu_2||_\infty. \label{a}
\end{align}
\end{lemma}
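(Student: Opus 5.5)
The three estimates follow directly from the explicit formulas in \eqref{eq:ZP} and \eqref{eq:fg}, using only $\|\nu_i\|_\infty\le M$ and Lemma \ref{lem:zero}. I will establish them in the order \eqref{Z}, \eqref{P}, \eqref{a}, since each feeds into the next.

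For \eqref{Z}, subtract the two definitions:
\[
Z_{\nu_1}(t)-Z_{\nu_2}(t) = -K\int_0^t(\nu_1(s)-\nu_2(s))\,ds .
\]
Hence $|Z_{\nu_1}(t)-Z_{\nu_2}(t)|\le Kt\|\nu_1-\nu_2\|_\infty\le KT_0\|\nu_1-\nu_2\|_\infty$ for all $t\in[0,T_0]$.

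For \eqref{P}, the $P(0)$ terms cancel, so
\[
\widetilde P_{\beta,\nu_1}(t)-\widetilde P_{\beta,\nu_2}(t)=e^{-\beta t}\eta K\int_0^t(\nu_1-\nu_2)\,ds .
\]
Since $\beta\ge r>0$ and $t\ge 0$, we have $e^{-\beta t}\le 1$, and the same bound as before gives $\eta KT_0\|\nu_1-\nu_2\|_\infty$.

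For \eqref{a}, write $D_i(t):=N(t)-Z_{\nu_i}(t)$. By Lemma \ref{lem:zero}, which applies because $T_0\le \delta_0/(2KM)$ under the standing assumptions, both $D_i\ge \delta_0/2$ on $[0,T_0]$. Then
\[
a_{\nu_1}(t,x)-a_{\nu_2}(t,x)=xN'(t)\,\frac{D_2(t)-D_1(t)}{D_1(t)D_2(t)} = xN'(t)\,\frac{Z_{\nu_1}(t)-Z_{\nu_2}(t)}{D_1(t)D_2(t)} .
\]
Now bound each factor: $|x|\le\overline N$ on the domain $[0,\overline N]$, $|N'(t)|\le\|N'\|_\infty$, the denominator is at least $\delta_0^2/4$, and the numerator is controlled by \eqref{Z}. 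This gives
\[
\|a_{\nu_1}-a_{\nu_2}\|_\infty\le \frac{4\overline N\|N'\|_\infty KT_0}{\delta_0^2}\|\nu_1-\nu_2\|_\infty .
\]
The same computation without the factor $x$ gives the corresponding estimate for the spatial derivative $a_\nu(t)=N'(t)/(N(t)-Z_\nu(t))$ (with $\overline N$ replaced by $1$). That version is the one actually needed later in \eqref{eq:keyf}.

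None of these steps is a real obstacle. The only point requiring care is making sure Lemma \ref{lem:zero} applies uniformly to both $\nu_1$ and $\nu_2$ in $\mathcal X_M$, which holds because the constraint on $T_0$ depends only on $M$. I will also state explicitly that the sup-norm in \eqref{a} is taken over $[0,T_0]\times[0,\overline N]$.
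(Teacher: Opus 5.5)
Your proposal is correct and matches the paper's proof: \eqref{Z}--\eqref{P} come from subtracting the formulas in \eqref{eq:ZP}, and \eqref{a} comes from writing $a_{\nu_1}-a_{\nu_2}=xN'(t)\,(Z_{\nu_1}-Z_{\nu_2})/(D_1D_2)$, then applying Lemma \ref{lem:zero} to the denominator and \eqref{Z} to the numerator. Your side remarks are accurate additions that the paper leaves implicit: $e^{-\beta t}\le 1$ needs $\beta>0$, and the $x$-free bound for $a_\nu(t)=\partial_x a_\nu$ is the version used downstream.
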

\begin{proof}
The bounds \eqref{Z}--\eqref{P} follow immediately from \eqref{eq:ZP}.
For \eqref{a}, we get:
\begin{equation*}
\begin{aligned}
|a_{\nu_1}(t,x) - a_{\nu_2}(t,x)|
& = xN'(t) \frac{|Z_{\nu_2}(t) - Z_{\nu_1}(t)|}{(N(t) - Z_{\nu_1}(t))(N(t) - Z_{\nu_2}(t))} \\
& \le \frac{4 \overline{N} ||N'||_\infty K T_0}{\delta_0^2} ||\nu_1 - \nu_2||_\infty,
\end{aligned}
\end{equation*}
where we use Lemma \ref{lem:zero} to lower bound $N(t) - Z_\nu(t)$,
and \eqref{Z} to upper bound $|Z_{\nu_2}(t) - Z_{\nu_1}(t)|$.
\end{proof}

\quad Next we study the stability of the feedback control $\mu_\nu(t,x)$ with respect to $\nu$.
\begin{proposition}
\label{prop:HJBst}
Let the assumptions in Proposition \ref{prop:HJB} hold. 
For $T_0$ sufficiently small, 
we have for $\nu_1, \nu_2 \in \mathcal{X}_M$,
\begin{equation}
\label{eq:mu}
||\mu_{\nu_1} - \mu_{\nu_2}||_\infty \le C T_0   ||\nu_1 - \nu_2||_\infty,
\end{equation}
for some $C > 0$ independent of $T_0$.
\end{proposition}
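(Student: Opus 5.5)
Since $\mu_\nu$ does not depend on $x$ by Proposition \ref{prop:HJB}, formula \eqref{eq:munu} reduces the claim to a scalar estimate in $t$. Write
\begin{equation*}
\Theta_\nu(t) := e^{\beta t}\Big(\widetilde{P}_{\beta,\nu}(T_0)\, e^{\int_t^{T_0} a_\nu(s)\,ds} - \widetilde{P}_{\beta,\nu}(t)\Big), \qquad \mu_\nu(t) = N'(t)\,(L')^{-1}(\Theta_\nu(t)).
\end{equation*}
Let $\mathrm{Lip}$ be the Lipschitz constant of $(L')^{-1}$. Then
\begin{equation*}
|\mu_{\nu_1}(t)-\mu_{\nu_2}(t)| \le \|N'\|_\infty\, \mathrm{Lip}\, |\Theta_{\nu_1}(t)-\Theta_{\nu_2}(t)|,
\end{equation*}
so it suffices to show $\|\Theta_{\nu_1}-\Theta_{\nu_2}\|_\infty \le C T_0 \|\nu_1-\nu_2\|_\infty$. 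Throughout I take $T_0 \le \min\{1, \delta_0/(2KM)\}$, so that Lemma \ref{lem:zero} applies and $e^{\beta t}\le e^{\beta}$.

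I split the difference of $\Theta$ into three terms:
\begin{equation*}
\begin{aligned}
e^{-\beta t}(\Theta_{\nu_1}-\Theta_{\nu_2})(t) ={}& \big(\widetilde{P}_{\beta,\nu_1}(T_0)-\widetilde{P}_{\beta,\nu_2}(T_0)\big) e^{\int_t^{T_0} a_{\nu_1}} \\
&+ \widetilde{P}_{\beta,\nu_2}(T_0)\Big(e^{\int_t^{T_0} a_{\nu_1}} - e^{\int_t^{T_0} a_{\nu_2}}\Big) \\
&- \big(\widetilde{P}_{\beta,\nu_1}(t)-\widetilde{P}_{\beta,\nu_2}(t)\big).
\end{aligned}
\end{equation*}

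\textbf{Price terms.} The first and third terms are controlled by \eqref{P} in Lemma \ref{lem:coef}, giving a factor $\eta K T_0\|\nu_1-\nu_2\|_\infty$. In the first term the exponential is bounded by $e^{2\|N'\|_\infty T_0/\delta_0}\le e^{2\|N'\|_\infty/\delta_0}$, since $a_\nu(s)\le 2\|N'\|_\infty/\delta_0$ by Lemma \ref{lem:zero}.

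\textbf{Exponential term.} This is where I expect the main care is needed. I would use $|e^A-e^B|\le e^{\max(A,B)}|A-B|$ together with the bound on $a_{\nu_1}-a_{\nu_2}$. Here one must use the $x$-independent coefficient $a_\nu(t)=N'(t)/(N(t)-Z_\nu(t))$, not the version in \eqref{a}, which carries an unnecessary factor $\overline{N}$. The same computation as in Lemma \ref{lem:coef} gives
\begin{equation*}
|a_{\nu_1}(s)-a_{\nu_2}(s)| \le \frac{4\|N'\|_\infty K T_0}{\delta_0^2}\|\nu_1-\nu_2\|_\infty .
\end{equation*}
Integrating over $[t,T_0]$ contributes a further factor $T_0$, so this term is in fact $O(T_0^2)$, and a fortiori $O(T_0)$. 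The prefactor satisfies $|\widetilde{P}_{\beta,\nu_2}(T_0)|\le |P(0)|+\eta K M T_0 \le |P(0)|+\eta K M$, because $T_0\le 1$.

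\textbf{Conclusion.} Collecting the three terms and multiplying by $e^{\beta t}\le e^{\beta}$, all constants depend only on $\beta,\eta,K,M,P(0),\|N'\|_\infty,\delta_0$ and $\mathrm{Lip}$, and not on $T_0$. This yields \eqref{eq:mu}. The only subtlety is keeping every bound uniform in $T_0$, which is handled by imposing $T_0\le 1$ and fixing $M$ before shrinking $T_0$.
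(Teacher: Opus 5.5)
Your proof is correct and follows essentially the same route as the paper: the Lipschitz bound on $(L')^{-1}$ reduces the claim to the bracket in \eqref{eq:munu}, which is split into price differences (controlled by \eqref{P}) and a difference of exponentials (controlled by Lemma \ref{lem:zero} and the Lipschitz bound on $a_\nu$). Your bookkeeping is slightly cleaner than the paper's. You track the factor $e^{\beta t}$ explicitly, use the $x$-independent coefficient $a_\nu(t)$ to avoid the spurious $\overline{N}$, and fix $M$ before shrinking $T_0$, so every constant is visibly uniform in $T_0$.
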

\begin{proof}
Because $N \in \mathcal{C}^2([0,T_0])$ and $(L')^{-1}$ is Lipschitz,
we have:
\begin{equation}
\label{eq:muest}
\begin{aligned}
&|\mu_{\nu_1}(t,x) - \mu_{\nu_2}(t,x)| \\
&\quad \le C||N'||_\infty \left(\left|\widetilde{P}_{\beta, \nu_1}(T_0)e^{\int_t^{T_0} a_{\nu_1}(s) ds} - \widetilde{P}_{\beta, \nu_2}(T_0)e^{\int_t^{T_0} a_{\nu_2}(s) ds}\right| + |\widetilde{P}_{\beta, \nu_1}(t) - \widetilde{P}_{\beta, \nu_2}(t)|\right) \\
& \quad \le C||N'||_\infty \left((e^{\int_t^{T_0} a_{\nu_1}(s) ds} + 1) |\widetilde{P}_{\beta, \nu_1}(t) - \widetilde{P}_{\beta, \nu_2}(t)|
+ \widetilde{P}_{\beta, \nu_2}(T_0)\left|e^{\int_t^{T_0} a_{\nu_1}(s) ds}  - e^{\int_t^{T_0} a_{\nu_2}(s) ds} \right| \right)
\end{aligned}
\end{equation}
By Lemma \ref{lem:zero}, we get $||a_\nu||_\infty \le \frac{2||N'||_\infty}{\delta_0}$,
and hence 
\begin{equation}
\label{eq:exp}
e^{\int_t^{T_0} a_{\nu}(s) ds} \le e^{\frac{2||N'||_\infty T_0}{\delta_0}}.
\end{equation}
Combining \eqref{P}, \eqref{a}, \eqref{eq:muest} and \eqref{eq:exp} yields
\begin{equation*}
|\mu_{\nu_1}(t,x) - \mu_{\nu_2}(t,x)| \le C T_0(1 + e^{CT_0}+ e^{CT_0} T_0 + e^{CT_0} T_0^2) ||\nu_1 - \nu_2||_\infty,
\end{equation*}
which proves \eqref{eq:mu}.
\end{proof}

\subsection{Proof of Theorem \ref{thm:wellp}}

The proof is split into several steps.

\smallskip
\noindent
{\bf Step 1}. We prove that for $M$ sufficiently large and $T_0$ sufficiently small, $\mathcal{T}(\mathcal{X}_M) \subset \mathcal{X}_M$.
Because $N \in \mathcal{C}^2([0,T_0])$ and $(L')^{-1}$ is Lipschitz,
we have:
\begin{equation*}
||\mu_\nu||_\infty \le C(1+e^{CT_0})(1+\eta KM T_0),
\end{equation*}
for some $C > 0$ independent of $T_0$.
Thus,
$|(\mathcal{T}\nu)(t)| \le ||\mu_\nu||_\infty \le C(1+e^{CT_0})(1+\eta KM T_0)$.
Choosing $M \ge 2C(1+e)$,
and then $T_0 \le \min(\frac{1}{C}, \frac{1}{\eta KM})$ yields $||\mu_\nu||_\infty \le 2C(1+e) \le M$.

\smallskip
\noindent
{\bf Step 2}. Note that $|(\mathcal{T}\nu_1)(t) - (\mathcal{T}\nu_2)(t)|  \le ||\mu_{\nu_1} - \mu_{\nu_2}||_\infty$.
By Proposition \ref{prop:HJBst}, 
we have for $T_0$ sufficiently small, 
\begin{equation*}
||\mathcal{T}\nu_1 - \mathcal{T}\nu_2||_\infty \le  C'T_0 ||\nu_1 - \nu_2||_\infty \quad \mbox{for all } \nu_1, \nu_2 \in \mathcal{X}_M.
\end{equation*}

\smallskip
\noindent
{\bf Step 3}. By the previous two steps, 
$\mathcal{T}$ is a strict contraction on the complete space $\mathcal{X}_M$, 
provided that $T_0 \le \frac{1}{C'}$.
By Banach fixed-point theorem, there is a unique $\nu^{\tiny \mbox{eq}} \in \mathcal{X}_M$ such that $\mathcal{T}\nu^{\tiny \mbox{eq}} = \nu^{\tiny \mbox{eq}}$.
The formulas for $v(t,x)$ and $m(t,x)$ follow from Propositions \ref{prop:HJB} and \ref{prop:FP}.

\section{Numerical experiments}
\label{sc4}

\quad In this section, we provide numerical experiments to solve the fixed-point problem \eqref{eq:finalp},
and hence, the mean field problem $(\star)$.
Our focus is on the evolution of staking ratio $\frac{N(t) - Z^{\tiny \mbox{eq}}(t)}{N(t)}$
and the staking profile $m(t,\cdot)$ over time,
thereby providing support to the claims made in the introduction.

\quad To start, we set:
\begin{itemize}[itemsep = 3 pt]
\item
$N(t) = 100 + 5t$ (linear token supply);
\item
$Z(0) = 50$ (initial staking ratio is $50\%$);
\item
$m_0 \sim \mathcal{N}(45, 2.25)$ truncated over $[40,50]$;
\item
$\beta = 0.05$, $\eta = 0.15$, $P(0) = 1$, $K = 10$;
\item
$L(x) = x^2$ (quadratic transaction cost).
\end{itemize}
We also take $M = 500$ and $T_0 = 2$.
To evaluate decentralization of the staking profile, we consider the normalized {\em Herfindahl–Hirschman index} \cite{Hind} (HHI):
\begin{equation}
\mbox{HHI}(t):= N(t)\int_0^{N(t)} m(t,x)^2 dx.
\end{equation}
Uniform distribution on $[0,N(t)]$ yields $\mbox{HHI}(t) =1$,
and the smaller the value of HHI, the more decentralized the profile is.
Figure \ref{fig:1} provides a numerical illustration to the mean field problem $(\star)$.
\begin{figure}[!htbp]  
    \centering
    \begin{subfigure}{0.48\textwidth}
        \centering
        \includegraphics[width=\textwidth]{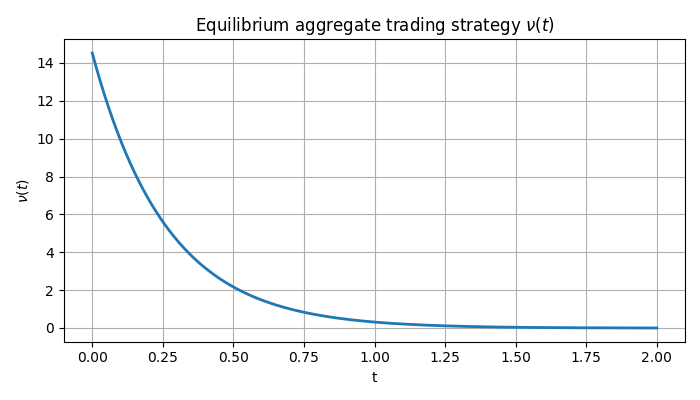}
        \caption{}
    \end{subfigure}
    \begin{subfigure}{0.48\textwidth}
        \centering
        \includegraphics[width=\textwidth]{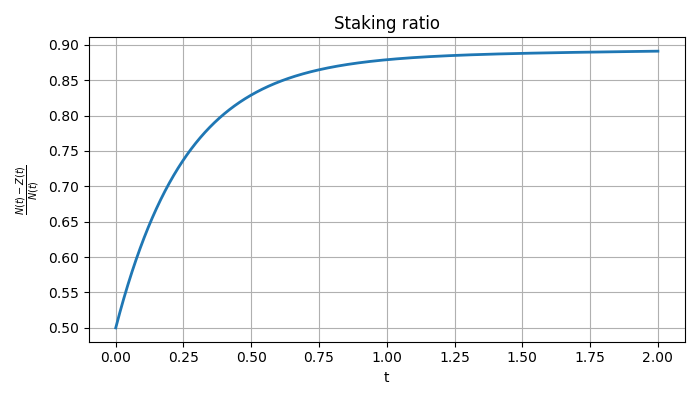}
        \caption{}
    \end{subfigure}
        \begin{subfigure}{0.48\textwidth}
        \centering
        \includegraphics[width=\textwidth]{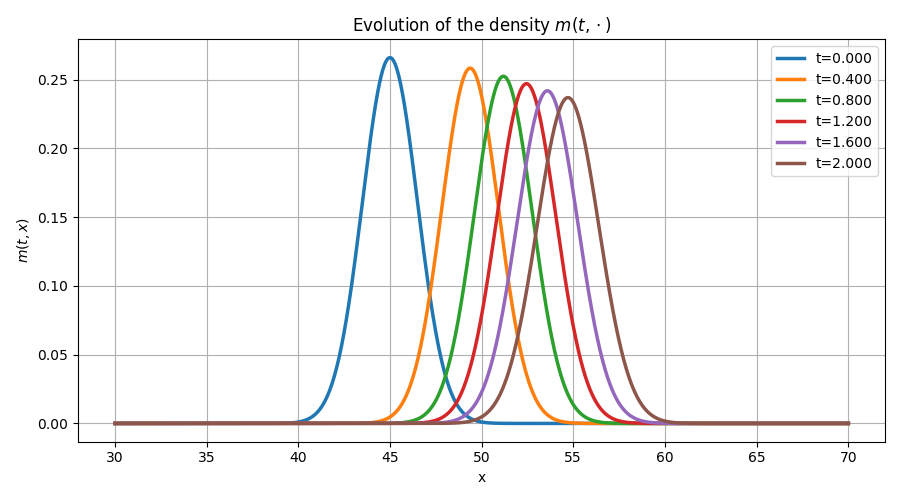}
        \caption{}
    \end{subfigure}
    \begin{subfigure}{0.48\textwidth}
        \centering
        \includegraphics[width=\textwidth]{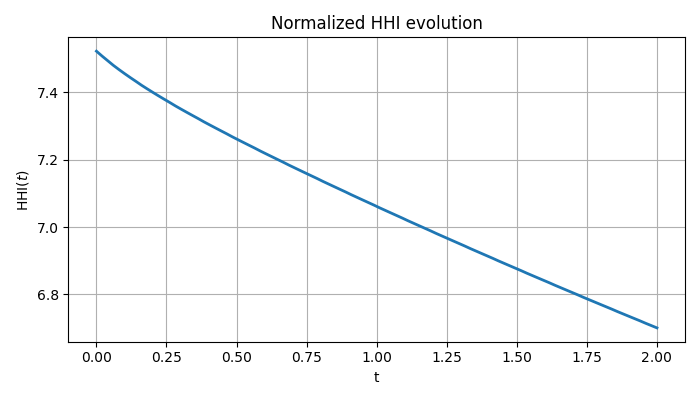}
        \caption{}
    \end{subfigure}
      \caption{Mean field problem $(\star)$ with $N(t) = 100 + 5t$, $L(x) = x^2$.}
     \label{fig:1}
\end{figure}

\quad Figure \ref{fig:1}(A) solves numerically the equilibrium trading strategy $\nu^{\tiny \mbox{eq}}(t)$.
Figure \ref{fig:1}(B) plots the staking ratio under $\nu^{\tiny \mbox{eq}}(t)$:
the ratio is increasing over time, 
which implies that trading can enhance the staking activity, 
and hence, the participation in the validation process of the PoS ecosystem.
Figure \ref{fig:1}(C) illustrates the evolution of the staking profile $m(t,\cdot)$,
and Figure \ref{fig:1}(D) computes the  HHI: 
the staking profile decentralizes over time, 
demonstrating the power of market incentives in blockchain protocols. 
In fact, the decentralization of the staking profile can be justified by the following corollary.
\begin{corollary}
Let the assumptions of Proposition \ref{prop:FP} hold.
We have $\mbox{\em HHI}(\cdot)$ is decreasing,
and $\mbox{\em HHI}(t) \le \mbox{\em HHI}(0)$.
\end{corollary}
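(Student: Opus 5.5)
The plan is to use the explicit push-forward formula \eqref{eq:mfor} from Proposition \ref{prop:FP}, which turns $\mbox{HHI}(t)$ into a closed-form expression. Write $A(t):=\int_0^t a_\nu(s)\,ds$ with $a_\nu(s)=\frac{N'(s)}{N(s)-Z_\nu(s)}$, and $c(t):=\int_0^t e^{-A(s)}\mu_\nu(s)\,ds$. Then \eqref{eq:mfor} reads $m(t,x)=m_0\big(e^{-A(t)}x-c(t)\big)e^{-A(t)}$, so $m(t,\cdot)$ is an affine rescaling of $m_0$. By Lemma \ref{prop:inside} and Proposition \ref{prop:FP}, $\operatorname{supp} m(t,\cdot)\subset[\frac{\varepsilon_0}{2},N(t)-\frac{\varepsilon_0}{2}]$. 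Hence the integral over $[0,N(t)]$ in the definition of HHI may be replaced by an integral over the whole line, with no boundary terms lost.

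The key computation is the change of variables $y=e^{-A(t)}x-c(t)$, so that $dx=e^{A(t)}dy$. This gives
\begin{equation*}
\int_0^{N(t)} m(t,x)^2\,dx = e^{-2A(t)}\int m_0(y)^2 e^{A(t)}\,dy = e^{-A(t)}\int_0^{\overline{N}} m_0(y)^2\,dy .
\end{equation*}
Therefore
\begin{equation*}
\mbox{HHI}(t)=\frac{N(t)}{N(0)}\,e^{-A(t)}\,\mbox{HHI}(0).
\end{equation*}
This function is $\mathcal{C}^1$ in $t$, since $N\in\mathcal{C}^2$ and $a_\nu$ is continuous by Lemma \ref{lem:zero}.

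Differentiating the logarithm, I would show
\begin{equation*}
\frac{d}{dt}\log \mbox{HHI}(t)=\frac{N'(t)}{N(t)}-\frac{N'(t)}{N(t)-Z_\nu(t)} = -\frac{N'(t)\,Z_\nu(t)}{N(t)\,(N(t)-Z_\nu(t))}.
\end{equation*}
Here $N'(t)\ge 0$ by assumption, $N(t)>0$, and $N(t)-Z_\nu(t)\ge \delta_0/2>0$ by Lemma \ref{lem:zero}. So the sign is nonpositive as soon as $Z_\nu(t)\ge 0$. Monotonicity then gives $\mbox{HHI}(t)\le\mbox{HHI}(0)$.

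The main obstacle is justifying $Z_\nu(t)\ge 0$. Modeling-wise this is automatic, since $Z$ counts unstaked tokens, but the fixed-point analysis does not enforce it. I would handle it by noting that $Z(0)\ge 0$. If $Z(0)>0$, then $Z_\nu(t)\ge Z(0)-KMt>0$ for $T_0$ small, exactly as in Lemma \ref{lem:zero}. If instead one wants it for the equilibrium over the full horizon, I would impose it as a standing admissibility condition, namely $0\le Z(t)\le N(t)$ from \eqref{eq:conserv}.
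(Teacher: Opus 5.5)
Your proposal is correct and follows the paper's own argument: you change variables in \eqref{eq:mfor} to get $\mbox{HHI}(t)=\frac{N(t)}{N(0)}e^{-\int_0^t a_\nu(s)\,ds}\,\mbox{HHI}(0)$ and then differentiate, which shows the derivative has the sign of $-N'(t)Z_\nu(t)$. Your sign is the right one (the paper's last line drops the minus and writes $\ge 0$), and you are right to impose $Z_\nu(t)\ge 0$ explicitly: the paper uses it tacitly, yet it is genuinely needed, since e.g.\ for $Z(0)=0$ and $\nu>0$ one gets $Z_\nu<0$ and the HHI would increase.
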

\begin{proof}
By \eqref{eq:mfor}, we have:
\begin{equation*}
m(t,x) = m_0(A(t) x + B(t)) A_t, \quad \mbox{with } A(t):= e^{-\int_0^t a_\nu(s)ds}.
\end{equation*}
So $A'(t) = -a_\nu(t) A(t)$.
By a change of variables, we have:
\begin{equation*}
\mbox{HHI}(t) = \frac{A(t) N(t)}{N(0)} \mbox{HHI}(0).
\end{equation*}
As a result, 
\begin{equation*}
\begin{aligned}
\mbox{HHI}'(t) &= \frac{\mbox{HHI}(0)}{N(0)} (A'(t) N(t) + A(t) N'(t)) \\
& = \frac{\mbox{HHI}(0)}{N(0)}\left( -\frac{N'(t)}{N(t) - Z_\nu(t)} A(t) N(t) + A(t)N'(t)\right) \\
& = A(t) N'(t)  \frac{\mbox{HHI}(0)}{N(0)} \frac{Z_\nu(t)}{N(t) - Z_\nu(t)} \ge 0.
\end{aligned}
\end{equation*}
\end{proof}

{\bf Transaction costs}: We study the effect of the transaction cost on the mean field problem $(\star)$.
Fixing all the other parameters as above, we consider:
\begin{equation*}
L(x) = x^4 \quad \mbox{and} \quad L(x) = x^8.
\end{equation*}

\begin{figure}[!htbp]
    \centering
    \begin{subfigure}{0.32\textwidth}
        \centering
        \includegraphics[width=\textwidth]{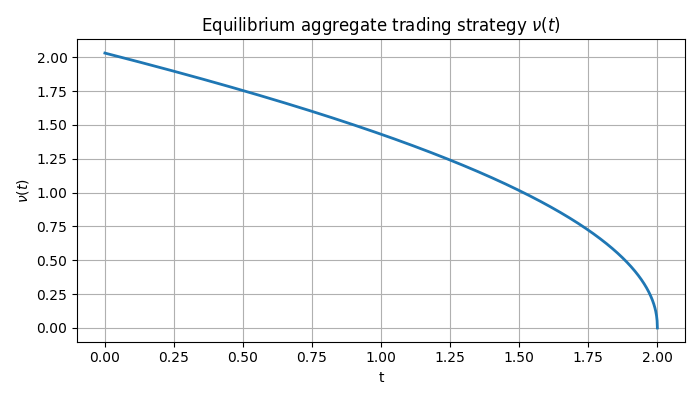}
    \end{subfigure}
    \begin{subfigure}{0.32\textwidth}
        \centering
        \includegraphics[width=\textwidth]{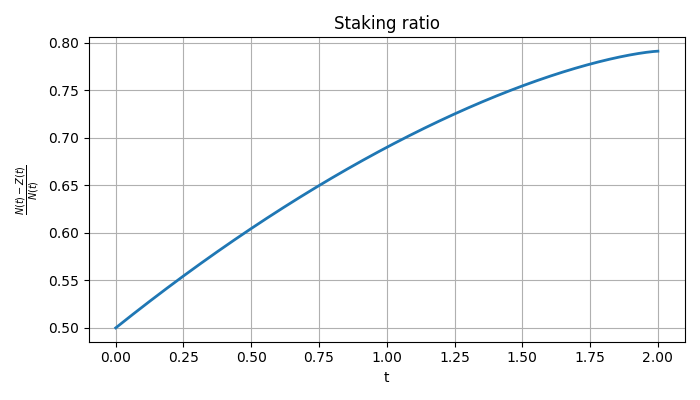}
    \end{subfigure}
        \begin{subfigure}{0.32\textwidth}
        \centering
        \includegraphics[width=\textwidth]{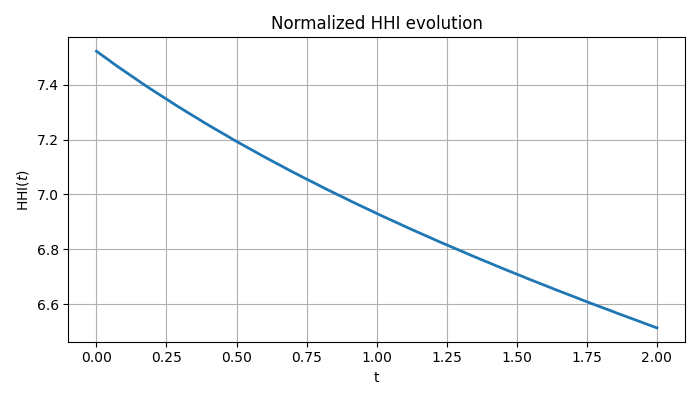}
    \end{subfigure}
     \begin{subfigure}{0.32\textwidth}
        \centering
        \includegraphics[width=\textwidth]{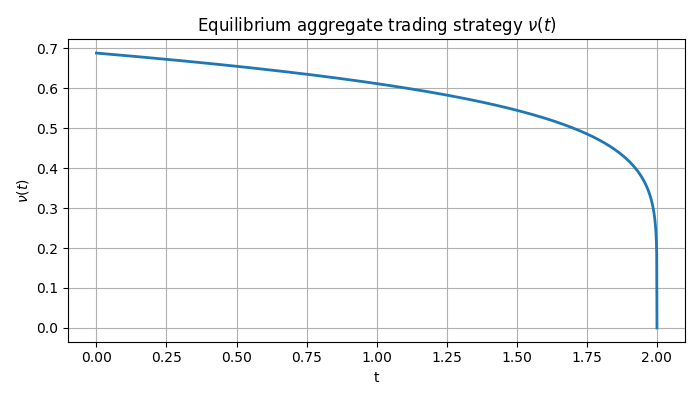}
    \end{subfigure}
    \begin{subfigure}{0.32\textwidth}
        \centering
        \includegraphics[width=\textwidth]{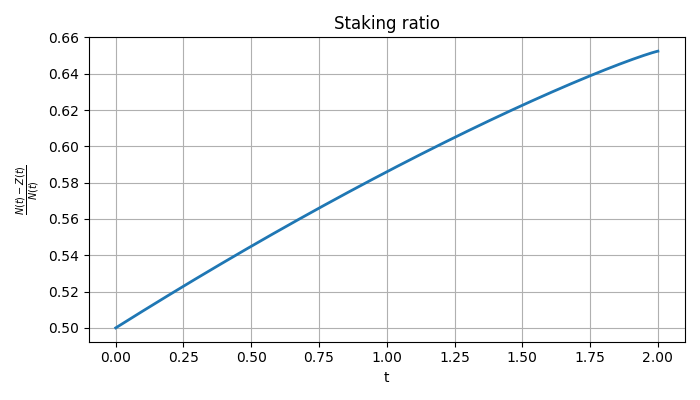}
    \end{subfigure}
        \begin{subfigure}{0.32\textwidth}
        \centering
        \includegraphics[width=\textwidth]{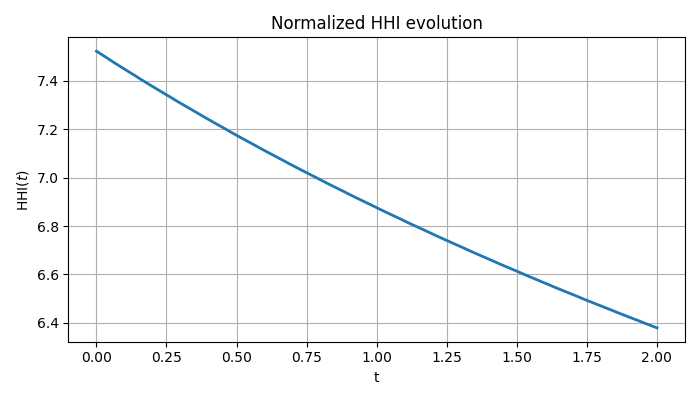}
    \end{subfigure}
     \caption{Mean field problem $(\star)$ with $N(t) = 100 + 5t$, $L(x) = x^4$ (top) and $L(x) = x^8$ (bottom).}
    \label{fig:2}
\end{figure}

Figure \ref{fig:2} plots the equilibrium trading strategy, the staking ratio, and the HHI 
for linear token supply and quartic/octic transaction costs.
Comparing with Figure \ref{fig:1}, we observe:
\begin{itemize}[itemsep = 3 pt]
\item
Staking ratio increases over time, however, 
the increase slows 
as $L$ grows faster (from $0.9$ to $0.8$ and $0.66$ at $T_0 = 2$).
This can be explained by the fact that high transaction costs suppress trading activities,
and hence, limit the staking ratio.
\item
The staking profile decentralizes over time,
and is (slightly) more decentralized as $L$ grows faster 
(from $6.7$ to $6.5$ and $6.4$ at $T_0 = 2$).
\end{itemize}

\medskip
{\bf Token supply}: Here we consider different token supplies. 
Fixing all the other parameters as above, we set:
\begin{equation*}
N(t) = (100^{\frac{1}{\alpha}} + t)^{\alpha}, \quad \mbox{with } \alpha = 0.8, 1.0 \mbox{ and } 1.2.
\end{equation*}

\begin{figure}[!htbp]
    \centering
    \begin{subfigure}{0.32\textwidth}
        \centering
        \includegraphics[width=\textwidth]{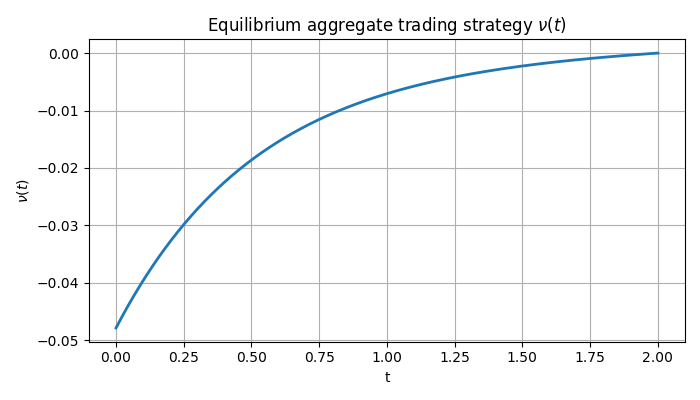}
    \end{subfigure}
    \begin{subfigure}{0.32\textwidth}
        \centering
        \includegraphics[width=\textwidth]{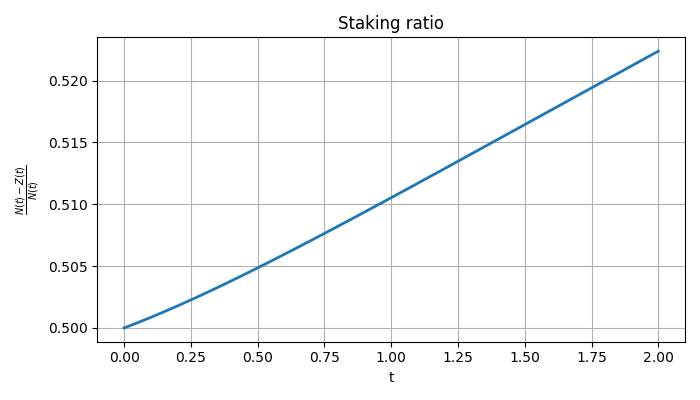}
    \end{subfigure}
        \begin{subfigure}{0.32\textwidth}
        \centering
        \includegraphics[width=\textwidth]{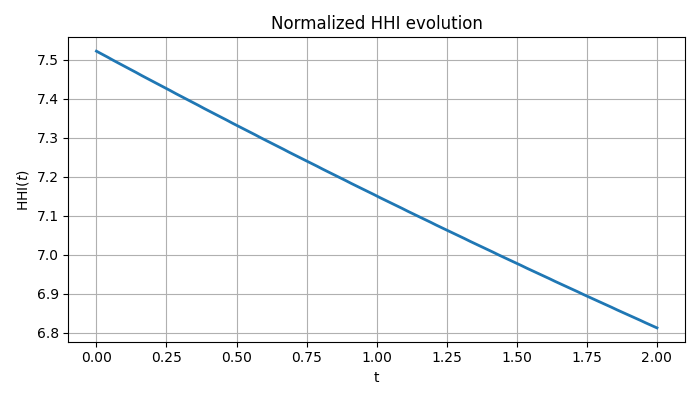}
    \end{subfigure}
      \begin{subfigure}{0.32\textwidth}
        \centering
        \includegraphics[width=\textwidth]{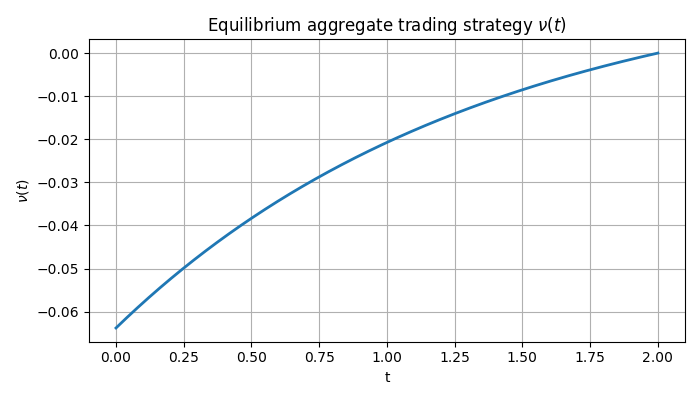}
    \end{subfigure}
    \begin{subfigure}{0.32\textwidth}
        \centering
        \includegraphics[width=\textwidth]{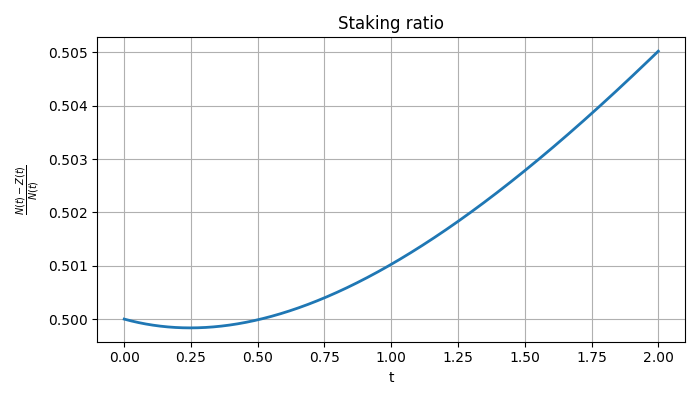}
    \end{subfigure}
        \begin{subfigure}{0.32\textwidth}
        \centering
        \includegraphics[width=\textwidth]{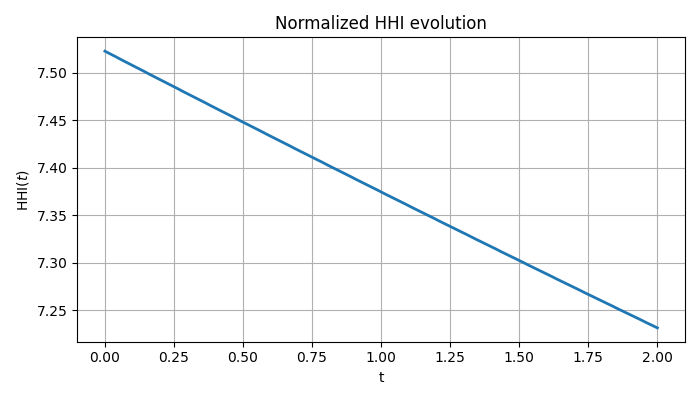}
    \end{subfigure}
     \begin{subfigure}{0.32\textwidth}
        \centering
        \includegraphics[width=\textwidth]{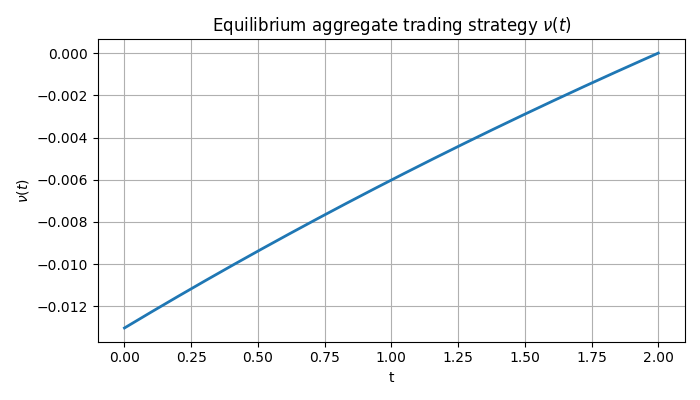}
    \end{subfigure}
    \begin{subfigure}{0.32\textwidth}
        \centering
        \includegraphics[width=\textwidth]{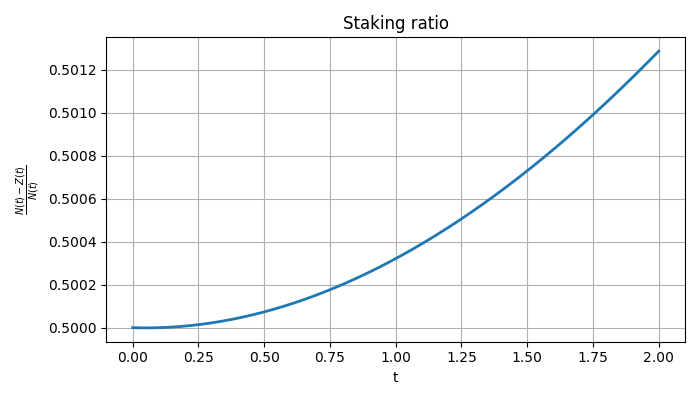}
    \end{subfigure}
        \begin{subfigure}{0.32\textwidth}
        \centering
        \includegraphics[width=\textwidth]{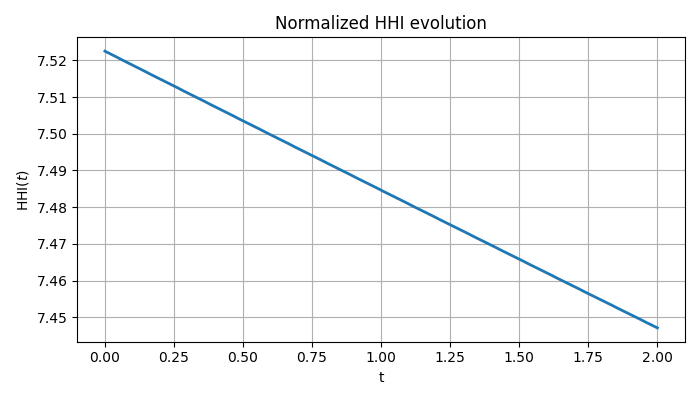}
    \end{subfigure}
     \caption{Mean field problem $(\star)$ with $N(t)=(100^{\frac{1}{\alpha}} + t)^{\alpha}$ with $\alpha = 1.2$ (top), $\alpha = 1.0$ (middle) and $\alpha = 0.8$ (bottom), $L(x) = x^2$.}
    \label{fig:3}
\end{figure}

Figure \ref{fig:3} plots the equilibrium trading strategy, the staking ratio, and the HHI 
for polynomial token supplies and quadratic transaction cost.
We also observe that the staking ratio increases over time,
and the staking profile decentralizes over time and is more decentralized as the token supply grows faster.

\section{Conclusion and discussions}
\label{sc5}

\quad In this paper, we studied the interaction between centralized trading activities and PoS blockchain ecosystems. 
By incorporating market impact from centralized exchanges into the PoS framework, we formulated a continuous-time mean field model describing the joint evolution of trading, staking, and token allocation among the miners. 
The resulting equilibrium problem couples an HJB equation with a transport equation via a nonlocal fixed-point condition.
We established the local well-posedness of the mean field system, and derived a semi-explicit characterization of the equilibrium trading strategy.
Numerical experiments demonstrate that  centralized trading activities may increase staking participation and promote decentralization of the staking distribution through market incentives. 

\quad The semi-explicit structure of the solution is largely due to the linear terminal condition imposed in the HJB equation. 
For a more general terminal condition, e.g., of the form $e^{-\beta T} h(x)$ with $h$ being interpreted as a utility function,
the solution to the HJB equation, and consequently the associated continuity equation, is no longer tractable. 
Moreover, the feedback control is time- and state-dependent.
Nevertheless, under suitable conditions on $h$, 
we can still establish the local well-posedness of the mean field system.
A very interesting question is to determine the maximal time horizon $T_0$ for which the mean field system remains well-posed.

\quad By a private communication with Paul Y. Zhang \cite{TZ+}, 
it is possible to approximate our mean field problem on bounded domains
by a mean field game on the whole space by introducing a penalty function near the boundary. 
Under the Almgren-Chriss model (i.e., quadratic transaction costs),
the argument in \cite{MM24} allows us to show that the mean field game is well-posed until $T_0 $ satisfying
$\left(\sup_{t \in [0,T_0]} e^{\beta t} N'(t) \right)^2 \eta T_0^2 \le 8\rho^2$.

\quad From the numerical perspective, the coupled HJB--continuity system also presents substantial challenges. In particular, numerical approximation of the continuity equation requires careful treatment in order to preserve positivity, mass conservation, and stability of the evolving density profile. Developing robust and efficient numerical schemes for the fully coupled mean field system remains an important direction for future research.

\bigskip
{\bf Acknowlegdment}: 
We thank David Yao, Qiang Du and Paul Yuming Zhang for helpful discussions at various stages of the project.
This research is supported by NSF CAREER Award DMS-2538791, the Tang
Family Assistant Professorship
and a Columbia-CityU/HK collaborative project that is supported by InnoHK Initiative, The Government of the HKSAR and the AIFT Lab.

\bibliography{unique}
\bibliographystyle{abbrv}

\end{document}